\documentclass[journal,10pt]{IEEEtran}
\usepackage{amsmath,amssymb,mathtools}
\usepackage{graphicx}
\usepackage{tikz}
\usetikzlibrary{arrows.meta,positioning,fit,backgrounds,calc}
\usepackage{booktabs}
\usepackage{multirow}
\usepackage{array}
\usepackage{algorithm}
\usepackage{algpseudocode}
\usepackage{url}
\usepackage{xcolor}
\usepackage{balance}
\usepackage{microtype}
\usepackage{siunitx}
\usepackage{enumitem}
\usepackage{cite}
\usepackage{placeins}
\usepackage[hidelinks]{hyperref}

\graphicspath{{Fig/}}
\setlist[itemize]{leftmargin=*,nosep}

\newtheorem{proposition}{Proposition}
\newtheorem{lemma}{Lemma}

\title{From Intents to Algorithms: Verified Algorithm Discovery for Transport Networks}

\author{
\IEEEauthorblockN{Behnam Ojaghi, Ricard Vilalta, and Raül Muñoz} \\
  \IEEEauthorblockA{CTTC, Castelldefels, Barcelona, Spain\\
   \em \{bojaghi,ricard.vilalta,raul.munoz\}@cttc.es} 
 }

\begin{document}
\maketitle

\begin{abstract}
Intent-based networking decouples desired outcomes from device-level
configuration, but most systems still map intents to parameters of an
algorithm selected in advance. Large language models (LLMs) create an
opportunity to automate algorithm design, yet unrestricted generated code
is unsuitable for transport-network control because feasibility,
reproducibility, and robustness must be enforced independently of the
model. We present VERA-TN, a verification-guided framework that compiles a
network intent into a bounded algorithm-design specification. The target
architecture uses an LLM as a semantic variation operator over typed
request-ordering and path-ranking programs; generated logic remains
separated from a trusted allocator that enforces path validity, latency,
capacity, and single-path constraints. We prove feasibility preservation
under explicit assumptions and establish a sufficient bound for the
lexicographic latency tie-break in the exact reference model. The released
proof-of-concept instantiates the same interface with a bounded
ten-parameter numerical candidate and deterministic replay, rather than a
completed live-LLM/AST study. Across 150 certified held-out cases on a 28-node TEFNET24-derived hierarchy, evolutionary search reaches a mean priority-utility ratio of 0.958, compared with 0.952 for equal-budget random search and 0.940 for priority-greedy routing. The gain over random search is small but statistically detectable (Holm-adjusted p = 0.0083). The candidate does not improve congestion relative to MILP-C, and the effect of failure-aware training is inconclusive at the 0.05 level (p = 0.051). Eight discovery runs on the official national topology and replay on 12 unseen metro-regional topologies show no stable intent-specific specialization. These results support the trust-boundary and numerical-evolution claims but do not establish a benefit from LLM generation.
\end{abstract}

\begin{IEEEkeywords}
Large language models, automated algorithm design, intent-based
networking, transport networks, traffic engineering, evolutionary
computation, program synthesis, verification, network digital twin.
\end{IEEEkeywords}

\section{Introduction}
\label{sec:introduction}

\IEEEPARstart{I}{ntent-based} networking (IBN) seeks to replace
device-level configuration with declarative descriptions of the outcomes
that a network should achieve. Instead of specifying individual routes,
resource allocations, or controller commands, an operator can express
requirements such as preserving latency for a critical service, avoiding
congested transport links, prioritizing selected service classes, or
minimizing disruptive reconfiguration. An intent-based network system
then translates, resolves, activates, and assures these requirements
through a closed management loop~\cite{ibnsurvey}. This abstraction is
particularly attractive for beyond-5G and 6G infrastructures, where
transport resources must support heterogeneous services whose objectives
and operating conditions evolve over time.

Existing intent-driven resource-management approaches, however,
typically assume that the \emph{algorithm itself has already been
selected}. An intent can modify objective coefficients, SLA targets,
priorities, traffic demands, or constraints, but the underlying
optimization model, heuristic, or learned policy remains structurally
fixed. Recent intent-based network-slicing frameworks, for example,
translate service objectives into mathematical resource-allocation and
conflict-resolution problems and adapt their parameters as network
conditions change~\cite{ibnstnsm,m2oibns}. Such approaches provide
effective intent-aware optimization, but adaptation occurs primarily
\emph{within} a predefined algorithm.

This distinction becomes important when the appropriate decision logic
depends on the intent itself. Consider two feasible transport policies.
A latency-oriented policy may repeatedly choose the shortest path and
concentrate traffic on a scarce cut, whereas a resilience-oriented
policy may deliberately reserve that cut for critical services.
Similarly, a rule that maximizes the number of admitted requests can
behave differently from one that maximizes priority-weighted admission,
and a congestion-minimizing policy may be undesirable when
reconfiguration stability becomes the dominant objective. Changing the
numerical inputs of a fixed heuristic does not necessarily produce the
decision structure required by these different operational goals.

Large language models (LLMs) create a new possibility: instead of
selecting only the \emph{output} of an optimization procedure, they can
participate in designing the procedure itself. Recent work on
LLM-assisted algorithm design has demonstrated that language models can
generate and improve executable heuristics through evaluator-driven
search. FunSearch evolves a designated function inside a predefined
program skeleton~\cite{funsearch}; EoH jointly evolves heuristic ideas
and executable implementations~\cite{eoh}; ReEvo uses reflective
language feedback to guide heuristic evolution~\cite{reevo}; and
LLaMEA evolves complete metaheuristics from evaluation
feedback~\cite{llamea}. More recently, AlphaEvolve has demonstrated the
potential of evaluator-driven code evolution across mathematical,
scientific, and systems problems~\cite{alphaevolve}. These developments
suggest a progression from
\begin{equation}
\begin{aligned}
    \text{human-designed algorithm}
    &\rightarrow\\
    &\text{LLM-assisted algorithm design}.
\end{aligned}
    \label{eq:intro_algorithm_progression}
\end{equation}

Directly transferring this paradigm to transport-network control is,
however, problematic. A candidate algorithm can achieve a high
evaluation score while still containing invalid logic, numerical
instabilities, hidden dependencies, or operations that should never be
available to model-generated code. More importantly, network resource
allocation contains hard operational constraints. A generated procedure
must not reinterpret link capacities, ignore service-latency bounds,
construct invalid paths, or bypass the controller's enforcement policy.
Performance measured on a finite training set also provides no guarantee
that an algorithm will remain effective when the topology, traffic
matrix, telemetry accuracy, or active intent changes.

A second challenge is defining what should actually be generated.
Allowing an LLM to synthesize an unrestricted end-to-end network
controller unnecessarily combines algorithm design, feasibility
enforcement, resource-state management, and actuation into one
probabilistic component. Conversely, restricting generation to a fixed
linear score whose coefficients are optimized provides strong
containment but corresponds primarily to parameter tuning rather than
algorithm discovery. A useful middle ground requires a sufficiently
expressive structural search space while retaining a deterministic
execution boundary.

This paper introduces \emph{VERA-TN}, a
\emph{verification-guided LLM evolution framework for
intent-conditioned transport-network algorithm discovery}. Rather than
asking the LLM to allocate each transport request directly, VERA-TN
compiles an operator intent into a typed algorithm-design specification
and uses the LLM as a semantic variation operator over a bounded
domain-specific language (DSL). Candidate programs define reusable
request-ordering and candidate-path-ranking functions through controlled
combinations of transport features, thresholds, interactions, and
conditional branches.

Crucially, generated logic does not control feasibility or network
state. Candidate paths are constructed by trusted deterministic code,
and every selected path is independently checked against link-capacity,
latency, path-validity, and single-path requirements before resource
state can change. The resulting separation is
\begin{equation}
\begin{split}
\underbrace{\text{intent}}_{\text{desired behavior}}
&\rightarrow
\underbrace{\text{LLM-generated algorithm}}_{\text{ranking logic}}
\rightarrow
\underbrace{\text{trusted allocator}}_{\text{feasibility}} \\
&\rightarrow
\underbrace{\text{controller}}_{\text{actuation}}.
\end{split}
\label{eq:intro_trust_chain}
\end{equation}

VERA-TN is \emph{verification-guided} rather than universally
``verified.'' The methodology separates several distinct levels of
evidence. Static and runtime verification determine whether a candidate
program is admissible to execute. The trusted allocator preserves the
modeled transport constraints independently of the generated ranking
logic. Exact optimization supplies an external quality reference on
tractable instances. Finally, failure-oriented scenarios examine robustness under analytical link failures and capacity degradation. Network-digital-twin (NDT) evaluation remains a target operational-validation stage.

Another important design objective is to remove the LLM from the
service-critical execution path. Once discovery is complete, the
selected typed abstract syntax tree (AST) is serialized, versioned, and
compiled into deterministic local code. Transport requests can then be
processed without invoking the originating LLM. This separation
amortizes the potentially expensive discovery process over repeated
network operation and allows the resulting algorithm to be inspected,
replayed, audited, and rolled back independently of the model provider.

The main contributions of this work are as follows:

\begin{itemize}

    \item \textbf{Intent-conditioned algorithm-discovery formulation:}
    We formulate transport resource allocation at two levels. The first
    is a path-based exact reference problem for priority-aware admission
    and routing. The second treats the \emph{decision procedure itself}
    as the optimization object, explicitly distinguishing a network
    intent from individual transport service requests.

    \item \textbf{Bounded structural algorithm representation:}
    We introduce a typed intent-to-algorithm intermediate
    representation and a transport-specific DSL in which an LLM can
    modify feature selection, bounded interactions, thresholds,
    lexicographic rules, and conditional decision structure rather than
    only numerical coefficients.

    \item \textbf{Trusted execution and feasibility preservation:}
    We isolate generated ranking logic from path construction, resource
    accounting, hard-constraint enforcement, and controller actuation.
    We establish a feasibility-preservation property showing that any
    candidate executed through the trusted allocator satisfies the
    modeled path, latency, capacity, and single-path constraints under
    the stated assumptions.

    \item \textbf{Verification-guided evolution:}
   We define a target typed-AST procedure that combines static and runtime verification, exact-oracle comparison, structural-diversity preservation, and structured failure-witness feedback. The released study evaluates only the bounded numerical instantiation.

    \item \textbf{Reproducible proof-of-concept evaluation:}
   We release deterministic code, seeds, candidate vectors, solver certificates, raw metrics, and figure-generation scripts. The numerical study uses 150 held-out cases on a TEFNET24-derived ~\cite{tefnet24} hierarchy and an external-validity check on the official national topology and 12 unseen metro-regional topologies. Model-connected typed-AST experiments remain future work.

\end{itemize}

These contributions are intended to answer a question that is different
from conventional LLM-assisted network control: \emph{can a generative
model help discover a reusable transport-network algorithm while the
network retains deterministic authority over what constitutes an
admissible action?}

The remainder of this paper is organized as follows.
Section~\ref{sec:related_work} positions VERA-TN with respect to
LLM-assisted algorithm design, LLM-based networking, intent-driven
optimization, and trusted network validation.
Section~\ref{sec:system_model} defines the intent and
transport-network system model and the reference allocation problem.
Section~\ref{sec:representation} introduces the bounded
intent-to-algorithm representation and trusted execution model.
Section~\ref{sec:verified_evolution} presents the verification-guided
LLM evolution procedure.
Section~\ref{sec:prototype} describes the topology, prototype, and
reproducibility framework.
Section~\ref{sec:evaluation} defines and reports the experimental
evaluation.
Section~\ref{sec:discussion} discusses implications and limitations,
and Section~\ref{sec:conclusion} concludes the paper.

\begin{figure*}[htbp]
    \centering
    \resizebox{0.95\textwidth}{!}{%
    \begin{tikzpicture}[
        >=Stealth,
        every node/.style={font=\scriptsize},
        box/.style={draw, thick, rounded corners=2pt, align=center,
                    minimum height=10mm, minimum width=24mm, inner sep=2pt},
        intentbox/.style={box, fill=blue!12},
        genbox/.style={box, fill=violet!18},
        archbox/.style={box, fill=orange!20},
        verifybox/.style={box, fill=red!14},
        trustbox/.style={box, fill=green!16},
        lbl/.style={font=\tiny},
    ]

    \node[trustbox] (compiler) {Intent compiler\\(JSON / IR)};
    \node[verifybox, right=6mm of compiler] (static) {Static check\\AST + types};
    \node[verifybox, right=6mm of static] (sandbox) {Sandbox\\replay + timeout};
    \node[trustbox, right=6mm of sandbox] (allocator) {Trusted allocator\\capacity + latency};
    \node[trustbox, right=6mm of allocator] (evaluator) {Scenario evaluator\\admission, congestion,\\reconfiguration};
    \node[trustbox, right=6mm of evaluator] (oracle) {MILP oracle\\quality gap};
    \node[trustbox, right=6mm of oracle] (robust) {Robustness evaluator\\analytical perturbations\\(optional)};

    \node[intentbox, above=16mm of compiler] (intent) {Network intent\\+ SLA/SLO};
    \node[genbox, above=16mm of sandbox] (llm) {LLM program\\generator};
    \node[archbox, above=16mm of evaluator] (archive) {Program archive\\+ population};
    \node[intentbox, above=16mm of robust] (accepted) {Accepted candidate\\algorithm};

    \draw[->] (intent) -- (compiler);
    \draw[->] (compiler) to[out=60,in=240] (llm);
    \draw[->] (llm) to[out=225,in=45]
        node[lbl, pos=0.45, above left]{candidate AST} (static);
    \draw[->] (static) -- (sandbox);
    \draw[->] (sandbox) -- (allocator);
    \draw[->] (allocator) -- (evaluator);
    \draw[->] (evaluator) -- (oracle);
    \draw[->] (oracle) -- (robust);
    \draw[->] (archive) -- (accepted);

    \draw[->, thick, red!70!black] (evaluator) to[out=135,in=-45]
        node[lbl, pos=0.5, right, text=red!70!black]{repair} (llm);
    \draw[->, thick, blue!60!black] (oracle) to[out=100,in=-80]
        node[lbl, pos=0.5, right, text=blue!60!black]{score} (archive);
    \draw[->, thick, red!70!black] (robust) to[out=130,in=-50]
        node[lbl, pos=0.5, right, text=red!70!black]{robustness} (archive);

    \draw[->, dashed, thick, violet!60!black] (archive) to[out=100,in=80,looseness=1]
        node[lbl, pos=0.5, below, align=center, text=violet!60!black]
        {parents + witnesses\\(next generation)} (llm);

    \begin{pgfonlayer}{background}
    \node[fit=(compiler)(static)(sandbox)(allocator)(evaluator)(oracle)(robust),
          draw, dashed, thick, green!40!black, rounded corners, fill=green!4,
          inner sep=4mm,
          label={[font=\tiny\bfseries, text=green!40!black]below:Trusted deterministic execution}]
          (trustedbox) {};
    \node[fit=(llm)(archive),
          draw, dashed, thick, violet!60!black, rounded corners, fill=violet!4,
          inner sep=4mm,
          label={[font=\tiny\bfseries, text=violet!60!black]above:LLM-generated (no network authority)}]
          (untrustedbox) {};
    \end{pgfonlayer}

    \node[genbox, minimum width=3mm, minimum height=3mm, inner sep=0]
        (lg1) at ($(compiler.south west)+(0,-14mm)$) {};
    \node[lbl, right=1mm of lg1] (lg1t) {LLM-generated};
    \node[verifybox, minimum width=3mm, minimum height=3mm, inner sep=0, right=5mm of lg1t] (lg2) {};
    \node[lbl, right=1mm of lg2] (lg2t) {Program verification};
    \node[trustbox, minimum width=3mm, minimum height=3mm, inner sep=0, right=5mm of lg2t] (lg3) {};
    \node[lbl, right=1mm of lg3] (lg3t) {Trusted allocation / evaluation};
    \node[archbox, minimum width=3mm, minimum height=3mm, inner sep=0, right=5mm of lg3t] (lg4) {};
    \node[lbl, right=1mm of lg4] (lg4t) {Archive / population};
    \node[intentbox, minimum width=3mm, minimum height=3mm, inner sep=0, right=5mm of lg4t] (lg5) {};
    \node[lbl, right=1mm of lg5] (lg5t) {External I/O};
    \end{tikzpicture}%
    }
    \caption{VERA-TN system architecture and trust boundary. The generated
algorithm (violet) is restricted to request-ordering and path-ranking
logic. Every candidate is verified, executed through the trusted
allocator, scored against the exact oracle, and stress-tested by the
robustness evaluator, all inside the trusted deterministic region
(green). Selected parents and structured failure witnesses are returned
to the generator for the next round of candidate generation, closing the
evolutionary loop. The architectural contribution is therefore
containment of generated decision logic rather than direct LLM control
of network resources.}
    \label{fig:architecture}
\end{figure*}

\section{Related Work}
\label{sec:related_work}

\subsection{Automated Heuristic and Algorithm Design}
Automated heuristic design predates LLM-based generation. Hyper-heuristics search over heuristics rather than directly over problem solutions, while genetic programming has long been used to evolve routing and dispatching rules, including their structure and numerical parameters. VERA-TN does not claim a new evolutionary optimizer. Its contribution is the trusted execution and verification boundary around a generated rule. Evaluator-driven algorithm design has moved from generating isolated code
fragments toward iterative search over executable procedures. FunSearch
evolves a designated function within a trusted program skeleton
\cite{funsearch}; EoH co-evolves heuristic ideas and implementations
\cite{eoh}; ReEvo uses reflective feedback to repair heuristics
\cite{reevo}; and LLaMEA evolves complete metaheuristics under an external
evaluator \cite{llamea}. LLM4AD and recent surveys organize this emerging
design space and its evaluation challenges \cite{llm4ad,surveyAD}, while
AlphaEvolve illustrates the broader potential of evaluator-guided code
evolution \cite{alphaevolve}. VERA-TN adopts the external-evaluator
principle but adds a transport-specific trust boundary: generated logic
may rank requests and prevalidated paths, but it cannot construct paths,
mutate resource state, or invoke a controller
These methods motivate VERA-TN’s typed target interface. However, the present numerical study does not evaluate an LLM and cannot establish whether LLM generation improves on genetic programming under the same grammar and evaluation budget.

\subsection{LLMs and Intent-Based Networking}
LLMs are increasingly used to translate networking tasks, assist protocol
reasoning, and optimize resource allocation. NetLLM develops a general
adaptation framework for networking tasks \cite{netllm}, and LLM-RAO uses
language-model reasoning within resource-allocation optimization
\cite{llmrao}. These methods primarily produce decisions or learned
policies. In parallel, IBN research has established closed-loop intent
translation, conflict resolution, activation, and assurance
\cite{ibnsurvey,intentpaper,ibnconflict}. Recent transport and slicing
frameworks map intents into multi-objective optimization models and
adaptation loops \cite{ibnstnsm,m2oibns}. VERA-TN asks a complementary
question: whether the reusable decision procedure can itself become the
bounded search object while the network retains deterministic enforcement.

\subsection{Traffic Engineering and Trusted Validation}
The allocation model builds on path-based multicommodity flow and
$K$-shortest-path traffic engineering \cite{multicommodity,kshortest}.
Exact mixed-integer optimization supplies a quality reference rather than
an online deployment requirement. The feedback loop is also related to
counterexample-guided inductive synthesis (CEGIS) \cite{cegis}, although
VERA-TN's failure witnesses are empirical counterexamples from finite
network scenarios, not exhaustive symbolic proofs. Network digital twins
provide a complementary environment for perturbation and controller-aware
validation \cite{ndtsurvey}; TeraFlowSDN illustrates an open controller
integration target \cite{teraflow}. The resulting distinction among
program admissibility, allocation feasibility, solution optimality, and
operational validation is central to the proposed methodology.

\section{System Model}
\label{sec:system_model}
The first design question is not which LLM should control the
network, but which decisions a generated algorithm should be allowed
to influence. Fig.~\ref{fig:architecture} therefore separates
\emph{algorithm discovery} from \emph{network authority}. The
generator may modify only the request-ordering and path-ranking logic.
Program admissibility, candidate-path construction, resource accounting,
constraint enforcement, and eventual actuation remain outside the
generated program and are performed by deterministic trusted
components. This boundary is the principal architectural mechanism
through which VERA-TN limits the operational effect of an erroneous
generated candidate.
\subsection{Intent and Transport-Network Model}
\label{subsec:intent_transport_model}

We consider an intent-driven packet transport network represented by an undirected graph
\begin{equation}
    \mathcal{G}=(\mathcal{V},\mathcal{E}),
\end{equation}
where $\mathcal{V}$ denotes the set of transport nodes and $\mathcal{E}$ denotes the set of physical or logical links. Each link $e\in\mathcal{E}$ is characterized by its usable capacity $C_e$, propagation and processing delay $d_e$, current carried load $l_e$, and, when applicable, an operational-risk attribute $q_e$ associated with failures, degradation, or shared-risk conditions. The network state available at a decision epoch is therefore represented by the topology together with the current resource and telemetry state of its links.

The network receives a set $\mathcal{R}$ of transport service requests. Each request $k\in\mathcal{R}$ is represented as
\begin{equation}
    r_k =
    \left(
    s_k,
    t_k,
    b_k,
    L_k^{\max},
    \pi_k,
    c_k
    \right),
    \label{eq:transport_request}
\end{equation}
where $s_k$ and $t_k$ denote the ingress and egress nodes, respectively, $b_k$ is the requested bandwidth, $L_k^{\max}$ is the maximum tolerable end-to-end transport latency, $\pi_k$ represents the service priority, and $c_k$ denotes the service class. The priority parameter does not itself determine a routing decision. Rather, it expresses the relative importance assigned by the operator or service policy when resource contention prevents simultaneous satisfaction of all requests.

Unlike conventional traffic-engineering formulations in which the objective function and algorithm are predefined by the network designer, we assume that network operation is initiated by a declarative intent. An intent specifies the desired network outcome without prescribing the algorithm that must achieve it. We represent a network intent conceptually as
\begin{equation}
    \mathcal{I} =
    \left(
    \mathcal{O},
    \mathcal{C},
    \mathcal{P}
    \right),
    \label{eq:intent_model}
\end{equation}
where $\mathcal{O}$ denotes the ordered set of optimization objectives, $\mathcal{C}$ denotes mandatory operational constraints, and $\mathcal{P}$ contains preference or policy information used to distinguish among feasible solutions.

For example, an operator may express the following intent:
\begin{quote}
\emph{``Maximize the admitted service priority, preserve critical traffic before best-effort traffic, satisfy all bandwidth and latency requirements, and, among solutions with equivalent service admission, minimize peak link utilization.''}
\end{quote}

The distinction between objectives and hard constraints is important. Capacity limits, latency bounds, and path validity are treated as non-negotiable constraints and cannot be relaxed or reinterpreted by an LLM-generated algorithm. Conversely, objectives such as reducing congestion, favoring particular service classes, minimizing reconfiguration, or increasing resilience determine how alternative feasible allocations should be ranked.

Before algorithm discovery, the natural-language intent is compiled into a structured, machine-readable specification. Ambiguous, contradictory, or unsupported requirements are rejected or returned for clarification rather than being resolved implicitly by the generated algorithm. This establishes a stable semantic boundary between what the operator requests and what the algorithm-discovery process is permitted to change.

For each request $k$, the trusted network layer computes a finite set $\mathcal{P}_k$ of up to $K$ loop-free candidate paths between $s_k$ and $t_k$. Candidate paths can be obtained using Yen's $K$-shortest-path algorithm or an equivalent deterministic path-enumeration procedure \cite{kshortest}. The latency associated with candidate path $p\in\mathcal{P}_k$ is
\begin{equation}
    D_{k,p}
    =
    \sum_{e\in p} d_e.
    \label{eq:path_delay}
\end{equation}

Any path for which
\begin{equation}
    D_{k,p} > L_k^{\max}
\end{equation}
is eliminated before algorithm evaluation. Hence, the generated algorithm operates only over paths satisfying the request-level latency constraint.

Let $x_{k,p}\in\{0,1\}$ be a binary decision variable equal to one when request $k$ is assigned to candidate path $p$, and zero otherwise. Let $a_k\in\{0,1\}$ indicate whether request $k$ is admitted. The resulting transport allocation must satisfy three fundamental feasibility conditions: each admitted request is assigned to exactly one path, the aggregate bandwidth placed on every link does not exceed its usable capacity, and only paths satisfying the corresponding latency bound may be selected.


\subsection{Reference Transport Resource-Allocation Problem}
\label{subsec:reference_optimization}

To quantify the quality of discovered algorithms independently of the LLM, we formulate an exact path-based reference optimization problem. Its primary objective is to maximize priority-weighted service admission, with normalized path latency used only as a secondary preference:
\begin{align}
    \max_{\{x_{k,p}\},\{a_k\}}
    \quad &
    \sum_{k\in\mathcal{R}} \pi_k a_k
    -
    \epsilon
    \sum_{k\in\mathcal{R}}
    \sum_{p\in\mathcal{P}_k}
    \frac{D_{k,p}}{L_k^{\max}}x_{k,p},
    \label{eq:reference_objective}
\end{align}
where $\epsilon>0$ controls only the latency tie-break. We use
integer-valued policy priorities, as is common for service classes.

\begin{lemma}[Lexicographic Dominance]
\label{lem:lexicographic}
For a nonempty request set, if every candidate path is latency-feasible and
$0<\epsilon<1/|\mathcal{R}|$, then any one-unit increase in
$\sum_k\pi_k a_k$ dominates every possible change in the normalized
latency term of \eqref{eq:reference_objective}.
\end{lemma}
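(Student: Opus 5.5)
The plan is to bound the range of the normalized latency term and compare it with the unit gap in the primary term. Write the objective of \eqref{eq:reference_objective} as $F(x,a)=A(a)-\epsilon\,T(x)$, where
\[
A(a)=\sum_k \pi_k a_k,\qquad T(x)=\sum_k\sum_{p\in\mathcal{P}_k}\frac{D_{k,p}}{L_k^{\max}}x_{k,p}.
\]
Throughout, we restrict to allocations satisfying the feasibility conditions stated before the lemma: each admitted request uses exactly one path, so $\sum_p x_{k,p}=a_k\le 1$, and only latency-feasible paths are used.

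The first step is to bound $T$ for any such allocation. Because every candidate path is latency-feasible, $0\le D_{k,p}/L_k^{\max}\le 1$. The lower bound uses $d_e\ge 0$, which is implicit in the delay model. Each request contributes at most one selected path, so $0\le T(x)\le \sum_k a_k\le |\mathcal{R}|$. Consequently, for any two feasible allocations, the difference in the latency term satisfies $|T(x)-T(x')|\le|\mathcal{R}|$.

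The second step is the comparison. Take feasible $(x,a)$ and $(x',a')$ with $A(a)\ge A(a')+1$. Then
\[
F(x,a)-F(x',a')\ge 1-\epsilon\,(T(x)-T(x'))\ge 1-\epsilon|\mathcal{R}|>0,
\]
using $\epsilon<1/|\mathcal{R}|$. The nonempty-request hypothesis ensures $1/|\mathcal{R}|$ is well defined. Because the priorities are integers, any two distinct values of $A$ differ by at least one. It follows that every maximizer of $F$ first maximizes $A$ and only then minimizes $T$, which is the claimed lexicographic dominance.

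The main subtlety is the boundary case: if every request is admitted on a path with $D_{k,p}=L_k^{\max}$, then $T$ attains exactly $|\mathcal{R}|$. The strict inequality on $\epsilon$ is what keeps the gap strictly positive in that case, so I would check that it is used there. I would also state explicitly that the bound $T\le|\mathcal{R}|$ depends on the single-path constraint, which is what prevents a request from contributing more than one unit of normalized latency.
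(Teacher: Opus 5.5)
Your proposal is correct and follows the paper's argument: latency feasibility and the single-path constraint place the normalized latency term in $[0,|\mathcal{R}|]$, so $\epsilon$ times any change in it is strictly below one, whereas integer priorities force any change in $\sum_k\pi_k a_k$ to be at least one. Your additional remarks (nonnegative delays, and that every maximizer of the objective first maximizes priority utility) only make explicit what the paper's proof leaves implicit.
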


\begin{IEEEproof}
For an admitted request, exactly one selected path contributes a term in
$[0,1]$ because $D_{k,p}\leq L_k^{\max}$. Hence the total normalized
latency term lies in $[0,|\mathcal{R}|]$, and its change between two
feasible allocations has magnitude at most $|\mathcal{R}|$. Therefore
$\epsilon|\Delta D|<1$, whereas a change in the integer-valued priority
sum is at least one. The priority objective consequently remains
lexicographically dominant.
\end{IEEEproof}
The artifact uses $\epsilon=0.5/(|\mathcal{R}|+1)$, which satisfies this
sufficient bound.

Each admitted request must be assigned to exactly one candidate path:
\begin{equation}
    \sum_{p\in\mathcal{P}_k} x_{k,p}
    =
    a_k,
    \qquad
    \forall k\in\mathcal{R}.
    \label{eq:single_path_assignment}
\end{equation}

The aggregate bandwidth allocated over each transport link cannot exceed its usable capacity:
\begin{equation}
    l_e+
    \sum_{k\in\mathcal{R}}
    \sum_{\substack{p\in\mathcal{P}_k:\\e\in p}}
    b_k x_{k,p}
    \leq
    C_e,
    \qquad
    \forall e\in\mathcal{E}.
    \label{eq:link_capacity}
\end{equation}

The decision variables satisfy
\begin{equation}
    x_{k,p}\in\{0,1\},
    \qquad
    a_k\in\{0,1\}.
    \label{eq:binary_variables}
\end{equation}

Because infeasible-latency paths are removed during candidate-path construction, the formulation does not require an additional latency constraint. Similarly, path continuity and loop freedom are inherited from the trusted path-enumeration procedure.

The reference formulation is not the algorithm deployed by the proposed framework. Instead, it serves as an optimization oracle on instances for which an exact solution can be obtained within the prescribed computational budget. Let $J^\star$ denote its optimal objective value and $J(A)$ denote the objective achieved by candidate algorithm $A$. For oracle-solvable instances, the relative optimality loss of $A$ is measured as
\begin{equation}
    \mathrm{Gap}(A)
    =
    \frac{
    J^\star - J(A)
    }{
    \max\left(\left|J^\star\right|,\epsilon\right)
    }.
    \label{eq:optimality_gap}
\end{equation}

This metric provides an external measure of solution quality that is independent of the LLM's own evaluation or reasoning. For larger instances where proving optimality becomes computationally impractical, the exact oracle can be replaced by the best available solver incumbent together with its optimization bound. Such cases should be reported separately from exact-optimality comparisons.

\subsection{Intent-Conditioned Algorithm-Discovery Problem}
\label{subsec:algorithm_discovery_problem}

The objective of this work is fundamentally different from solving the reference optimization problem once. Given a network-intent specification $\mathcal{I}$, a family of transport-network states $\mathcal{S}$, and a bounded space $\mathcal{A}$ of executable candidate algorithms, we seek an algorithm $A^\star$ whose decision logic performs well across the relevant operating scenarios while remaining inside the trusted feasibility envelope.

The algorithm-discovery problem is expressed as
\begin{align}
    A^\star
    =
    \arg\max_{A\in\mathcal{A}}
    \quad &
    F\left(A\mid\mathcal{I},\mathcal{S}\right)
    \label{eq:algorithm_discovery}
    \\
    \text{s.t.}
    \quad &
    \mathcal{V}
    \left(
    A,\mathcal{I},\mathcal{S}
    \right)
    =
    \mathrm{valid},
    \label{eq:algorithm_validity}
\end{align}
where $F(\cdot)$ evaluates the algorithm according to the objectives encoded by the network intent, and $\mathcal{V}(\cdot)$ represents the structural and operational verification mechanisms described in the following sections.

The output is therefore not a single set of routing decisions. Instead, it is a reusable intent-conditioned algorithm that can subsequently be executed on previously unseen network states without invoking the LLM for every individual service request. This distinction separates the proposed problem from direct LLM-based resource allocation: the LLM participates in discovering and improving the decision procedure, whereas trusted deterministic components execute and verify the resulting procedure during network operation.

The next section defines how the abstract intent in
\eqref{eq:intent_model} is compiled into a bounded
machine-readable representation and specifies the trusted execution
boundary within which candidate algorithms are permitted to operate.

\section{Intent-to-Algorithm Representation and Trusted Execution Model}
\label{sec:representation}

The system model in Section~\ref{sec:system_model} defines the
transport-resource allocation problem and the desired behavior specified
by a network intent. We now define the representation exposed to the
algorithm-discovery process and, critically, the boundary between
LLM-generated decision logic and trusted deterministic network
functions. The design follows the principle that the generated algorithm
may determine \emph{how feasible alternatives are prioritized}, but may
not redefine feasibility, modify network state directly, or bypass
operator constraints.

\subsection{Intent-to-Algorithm Intermediate Representation}
\label{subsec:intent_ir}

Directly exposing an unrestricted natural-language intent to an
algorithm-generation model would leave objectives, constraints, and
available network information open to interpretation. Therefore, before
algorithm discovery, the conceptual intent
$\mathcal{I}=(\mathcal{O},\mathcal{C},\mathcal{P})$ introduced in
\eqref{eq:intent_model} is compiled into the typed intermediate
representation (IR)
\begin{equation}
    \mathcal{I}_{\mathrm{IR}}
    =
    \left\langle
    \mathcal{O},
    \mathcal{C},
    \mathcal{F},
    \mathcal{B},
    \mathcal{D}
    \right\rangle ,
    \label{eq:intent_ir}
\end{equation}
where $\mathcal{O}$ contains the ordered optimization objectives,
$\mathcal{C}$ contains immutable operational constraints,
$\mathcal{F}$ defines the features accessible to generated algorithms,
$\mathcal{B}$ specifies syntactic and computational bounds on candidate
programs, and $\mathcal{D}$ defines the trusted execution and deployment
contract.

The objective component $\mathcal{O}$ specifies \emph{how feasible
allocations are evaluated}, rather than prescribing the algorithm used
to obtain them. For example, the intent considered in
Section~\ref{subsec:intent_transport_model} can be compiled into the
lexicographic hierarchy
\begin{equation}
    \mathcal{O}
    =
    \left[
    \max U_{\mathrm{adm}},
    \min U_{\max},
    \min C_{\mathrm{reconf}}
    \right],
    \label{eq:objective_hierarchy}
\end{equation}
where $U_{\mathrm{adm}}$ denotes priority-weighted service admission,
$U_{\max}$ denotes peak link utilization, and
$C_{\mathrm{reconf}}$ denotes reconfiguration cost when reconfiguration
is considered. Additional objectives can be activated by the intent
without changing the trusted feasibility conditions.

The constraint component $\mathcal{C}$ contains requirements that the
algorithm-discovery process cannot modify. For the transport-network
problem considered here,
\begin{equation}
\begin{aligned}
    \mathcal{C}=\{&\text{capacity},\ \text{latency},\\
    &\text{path validity},\ \text{single-path assignment}\}.
\end{aligned}
    \label{eq:hard_constraint_set}
\end{equation}
If an incoming intent contradicts an operator-defined hard constraint,
the compiler rejects the corresponding specification rather than
implicitly relaxing the constraint during algorithm generation.

The feature schema $\mathcal{F}$ exposes only normalized, read-only
network and request information. Candidate algorithms therefore cannot
query arbitrary controller state or introduce undeclared information
during execution. We partition the feature schema into request-level
and path-level features:
\begin{equation}
    \mathcal{F}
    =
    \mathcal{F}^{\mathrm{req}}
    \cup
    \mathcal{F}^{\mathrm{path}}.
    \label{eq:feature_schema}
\end{equation}

For request $k$, the request-level feature vector is represented as
\begin{equation}
    \mathbf{f}^{\mathrm{req}}_k
    =
    \left[
    \bar{\pi}_k,
    \bar{b}_k,
    \tau_k,
    \chi_k
    \right],
    \label{eq:request_features}
\end{equation}
where $\bar{\pi}_k$ is normalized service priority,
$\bar{b}_k$ is normalized requested bandwidth,
$\tau_k$ is latency tightness, and $\chi_k$ is a numerical or
categorical encoding of the service class.

Latency tightness is defined as
\begin{equation}
    \tau_k
    =
    \frac{
        \min_{p\in\mathcal{P}_k}D_{k,p}
    }{
        L_k^{\max}
    },
    \label{eq:latency_tightness}
\end{equation}
where values approaching one indicate that the request has little
latency margin even on its shortest feasible path.

The bounds component $\mathcal{B}$ restricts generated programs to a
finite, deterministic, and resource-bounded expression space. The
deployment component $\mathcal{D}$ specifies that candidate algorithms
may return only request-ranking and path-ranking scores. They cannot
directly modify link resources, issue controller commands, access the
file system or external network, or call undeclared tools.

Hence, the IR separates three responsibilities:
\begin{equation}
\underbrace{\mathcal{I}_{\mathrm{IR}}}_{\text{desired behavior}}
\longrightarrow
\underbrace{A}_{\text{decision logic}}
\longrightarrow
\underbrace{\mathcal{T}}_{\text{trusted execution}},
\label{eq:intent_algorithm_execution}
\end{equation}
where $\mathcal{T}$ denotes the deterministic transport-resource
allocation layer defined in Section~\ref{subsec:trusted_allocator}.

\subsection{Bounded Algorithm Design Space}
\label{subsec:bounded_design_space}

For a compiled intent $\mathcal{I}_{\mathrm{IR}}$, the discovery process
searches a bounded set
$\mathcal{A}(\mathcal{I}_{\mathrm{IR}})$ of executable candidate
algorithms. Each candidate is represented by two deterministic decision
functions:
\begin{equation}
    A
    =
    \left(
    h_{\mathrm{ord}},
    h_{\mathrm{path}}
    \right),
    \label{eq:candidate_algorithm}
\end{equation}
where $h_{\mathrm{ord}}(k)$ determines the order in which competing
transport requests are considered and
$h_{\mathrm{path}}(k,p)$ ranks candidate path $p$ for request $k$.

Accordingly,
\begin{align}
    h_{\mathrm{ord}}:
    \mathcal{F}^{\mathrm{req}}
    &\rightarrow
    \mathbb{R},
    \label{eq:request_function}
    \\
    h_{\mathrm{path}}:
    \mathcal{F}^{\mathrm{req}}
    \times
    \mathcal{F}^{\mathrm{path}}
    &\rightarrow
    \mathbb{R}.
    \label{eq:path_function}
\end{align}

A central requirement is that the search space allow \emph{structural}
variation rather than only coefficient tuning. We therefore use a typed
domain-specific expression language in which candidate functions may
contain bounded compositions of
\begin{equation}
\begin{split}
    \mathcal{G}_{\mathrm{DSL}}
    =
    \{&
    \texttt{weighted-sum},
    \texttt{lexicographic},
    \texttt{min},
    \texttt{max},\\
    &
    \texttt{bounded-product},
    \texttt{threshold},
    \texttt{conditional}
    \}.
\end{split}
\label{eq:dsl_operators}
\end{equation}

For example, a discovered path-ranking function may instantiate the
conditional structure
\begin{equation}
h_{\mathrm{path}}(k,p)
=
\begin{cases}
g_{\mathrm{tight}}(k,p),
&
\tau_k \geq \delta,
\\[2mm]
g_{\mathrm{flex}}(k,p),
&
\tau_k < \delta,
\end{cases}
\label{eq:conditional_path_rule}
\end{equation}
where $\delta\in[0,1]$ is an evolved threshold,
$g_{\mathrm{tight}}(\cdot)$ may emphasize path latency and bottleneck
utilization, and $g_{\mathrm{flex}}(\cdot)$ may instead emphasize
residual capacity, fragmentation, or operational risk.

Thus, evolution may modify not only numerical coefficients but also
feature selection, comparison order, interaction structure, thresholds,
and conditional branches. This distinction is important because a fixed
functional form with optimized weights would correspond primarily to
parameter tuning, whereas the bounded expression space permits
controlled discovery of distinct decision strategies.

The expression space deliberately excludes unrestricted programming
constructs. Candidate programs cannot contain recursion, unbounded
loops, dynamic imports, mutable global state, arbitrary system calls,
external network access, or controller invocations. Each generated
expression has a statically defined input type and must return a finite
scalar value within a bounded execution budget.

The resulting candidate can be serialized as an expression tree and
replayed independently of the LLM that generated it. This allows
structural comparison among algorithms and supports deterministic
evaluation, archival, and reproducibility.

\subsection{Transport Features for Algorithm Discovery}
\label{subsec:transport_features}

The path-ranking function operates on transport-specific quantities
computed by trusted deterministic components from the current topology
and telemetry state.

For request $k$, admitting it over path $p$ would produce the projected
utilization of link $e\in p$:
\begin{equation}
    u_e(k,p)
    =
    \frac{l_e+b_k}{C_e}.
    \label{eq:projected_link_utilization}
\end{equation}

The projected bottleneck utilization of path $p$ is
\begin{equation}
    u^{\max}_{k,p}
    =
    \max_{e\in p}
    u_e(k,p).
    \label{eq:projected_max_utilization}
\end{equation}

The normalized path-latency consumption is
\begin{equation}
    \rho_{k,p}
    =
    \frac{D_{k,p}}{L_k^{\max}},
    \label{eq:normalized_path_delay}
\end{equation}
where smaller values indicate greater latency headroom.

The normalized residual-capacity slack after admitting request $k$ is
defined as
\begin{equation}
    s_{k,p}
    =
    \min_{e\in p}
    \left(
    \frac{C_e-l_e-b_k}{C_e}
    \right).
    \label{eq:residual_slack}
\end{equation}

To characterize how evenly residual capacity is distributed across a
candidate path, let
\begin{equation}
    r_e(k,p)
    =
    \frac{C_e-l_e-b_k}{C_e}
\end{equation}
denote normalized residual capacity on link $e$ after the prospective
allocation. We define path fragmentation as
\begin{equation}
    \phi_{k,p}
    =
    \frac{1}{|p|}
    \sum_{e\in p}
    \left(
    r_e(k,p)-\bar{r}_{k,p}
    \right)^2,
    \label{eq:fragmentation}
\end{equation}
where
\begin{equation}
    \bar{r}_{k,p}
    =
    \frac{1}{|p|}
    \sum_{e\in p}
    r_e(k,p).
\end{equation}
A larger value of $\phi_{k,p}$ indicates a more uneven residual-resource
profile and therefore a potentially more fragmented path.

When operational-risk information is available, the path-risk feature
is defined as
\begin{equation}
    q_p
    =
    \frac{1}{|p|}
    \sum_{e\in p}q_e.
    \label{eq:path_risk}
\end{equation}

The normalized hop count is
\begin{equation}
    \eta_p
    =
    \frac{|p|}
    {H^{\max}},
    \label{eq:normalized_hops}
\end{equation}
where $H^{\max}$ is the maximum candidate-path length considered in the
current topology.

The resulting path-level feature vector is
\begin{equation}
\mathbf{f}^{\mathrm{path}}_{k,p}
=
\left[
u^{\max}_{k,p},
\rho_{k,p},
s_{k,p},
\phi_{k,p},
q_p,
\eta_p,
c^{\mathrm{reconf}}_{k,p}
\right],
\label{eq:path_feature_vector}
\end{equation}
where $c^{\mathrm{reconf}}_{k,p}$ is included only when the corresponding
intent considers reconfiguration cost.

All quantities in
\eqref{eq:request_features}--\eqref{eq:path_feature_vector}
are read-only from the perspective of the generated algorithm. The
candidate can use them to construct its ranking policy but cannot
directly alter the topology, capacity values, telemetry, or feasibility
criteria.

\subsection{Trusted Allocation Semantics}
\label{subsec:trusted_allocator}

The generated algorithm determines prioritization, whereas resource
allocation itself remains inside a trusted deterministic component.

Let
\begin{equation}
    \sigma_A
    =
    \operatorname{sort}_{\downarrow}
    \left(
    \mathcal{R};
    h_{\mathrm{ord}}
    \right)
    \label{eq:request_order}
\end{equation}
denote the request sequence produced by candidate algorithm $A$, with larger request scores processed first. For
each request $k$ in this sequence, its prevalidated candidate paths are
ordered according to
\begin{equation}
    \Gamma_k^A
    =
    \operatorname{sort}_{\uparrow}
    \left(
    \mathcal{P}_k;
    h_{\mathrm{path}}(k,\cdot)
    \right).
    \label{eq:path_order}
\end{equation}
Smaller path scores are preferred; candidate identifiers provide a deterministic final tie-break.

The trusted allocator evaluates the paths in $\Gamma_k^A$ sequentially.
A candidate path $p$ can be committed only if
\begin{equation}
    l_e+b_k
    \leq
    C_e,
    \qquad
    \forall e\in p.
    \label{eq:trusted_capacity_check}
\end{equation}

Since paths violating
\begin{equation}
    D_{k,p}
    \leq
    L_k^{\max}
    \label{eq:trusted_latency_check}
\end{equation}
have already been removed during candidate-path construction, a path
passing \eqref{eq:trusted_capacity_check} satisfies both the link-capacity
and latency requirements.

When such a path is found, the trusted state transition is
\begin{equation}
    l_e
    \leftarrow
    l_e+b_k,
    \qquad
    \forall e\in p,
    \label{eq:atomic_state_update}
\end{equation}
and the request is marked as admitted. If no path in $\Gamma_k^A$
satisfies the feasibility conditions, request $k$ is rejected.

Therefore, $h_{\mathrm{ord}}$ determines which request receives earlier
access to scarce resources and $h_{\mathrm{path}}$ determines which
feasible route is preferred, but neither function can directly commit an
infeasible allocation.

\subsection{Feasibility-Preservation Property}
\label{subsec:feasibility_property}

The trusted execution boundary provides a feasibility property that is
independent of the quality of the generated algorithm.

\begin{proposition}[Feasibility Preservation]
\label{prop:feasibility}
Assume that:
\begin{enumerate}
    \item the initial network state satisfies
    $l_e\leq C_e$ for every $e\in\mathcal{E}$;
    \item every candidate path $p\in\mathcal{P}_k$ is loop-free,
    connects $s_k$ to $t_k$, and satisfies
    $D_{k,p}\leq L_k^{\max}$;
    \item the trusted allocator admits request $k$ on path $p$ only if
    \eqref{eq:trusted_capacity_check} is satisfied; and
    \item resource-state updates in \eqref{eq:atomic_state_update} are
    atomic.
\end{enumerate}
Then every allocation returned by the trusted allocator satisfies path
validity, request-level latency constraints, single-path assignment, and
link-capacity constraints, independently of the structure or parameters
of $h_{\mathrm{ord}}$ and $h_{\mathrm{path}}$.
\end{proposition}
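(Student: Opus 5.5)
The argument is an induction over the sequence of trusted state transitions executed while the allocator processes $\sigma_A$. The invariant is that, after each step, $l_e\le C_e$ for every $e\in\mathcal{E}$, and every request processed so far is either rejected or assigned to exactly one path of $\mathcal{P}_k$. The key observation is that $h_{\mathrm{ord}}$ and $h_{\mathrm{path}}$ enter only through the permutations $\sigma_A$ in \eqref{eq:request_order} and $\Gamma_k^A$ in \eqref{eq:path_order}. Both are deterministic orderings of fixed finite sets, $\mathcal{R}$ and $\mathcal{P}_k$ respectively, because candidate identifiers break ties. So the proof will fix an \emph{arbitrary} pair of orderings and show the invariant holds regardless. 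This gives independence from the structure and parameters of the generated functions in one stroke. Since the DSL guarantees finite scalar outputs, the sort is well defined; if it were not, the allocator could fall back to identifier order without affecting the argument.

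\emph{Base case.} Before any request is processed, assumption~1 gives $l_e\le C_e$ for all $e$, and no request has yet been assigned.

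\emph{Inductive step.} Consider request $k$ in $\sigma_A$, and let the current loads satisfy the invariant. If the allocator rejects $k$, no state changes and $a_k=0$, $\sum_p x_{k,p}=0$, so \eqref{eq:single_path_assignment} holds for $k$. Otherwise the allocator commits the first $p\in\Gamma_k^A$ passing \eqref{eq:trusted_capacity_check}; by assumption~3 it can commit no other path. The loads then change as follows:
\begin{itemize}
\item On each $e\in p$, the update \eqref{eq:atomic_state_update} yields $l_e+b_k\le C_e$ by the check.
\item Links $e\notin p$ are untouched.
\end{itemize}
Hence the capacity invariant persists. Because evaluation stops at the first committed path, and $k$ appears only once in $\sigma_A$, we get $x_{k,p}=1$ for exactly one $p$, so $\sum_p x_{k,p}=a_k=1$.

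Assumption~4 is what justifies reasoning about the check and the update as one indivisible step. Without it, two interleaved commitments could each pass \eqref{eq:trusted_capacity_check} against a stale load. I would state explicitly that atomicity means the loads read in the check are exactly the loads to which $b_k$ is added.

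Having established the invariant for the final state, I would unroll the cumulative loads. Each final $l_e$ equals the initial load plus $\sum_k\sum_{p\ni e} b_k x_{k,p}$, and this sum satisfies \eqref{eq:link_capacity}. Path validity and the latency bound follow immediately from assumption~2. Every committed path lies in $\mathcal{P}_k$, so it is loop-free, connects $s_k$ to $t_k$, and has $D_{k,p}\le L_k^{\max}$.

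The main subtlety is bookkeeping rather than mathematics. The capacity check is performed against the evolving state, whereas the constraint \eqref{eq:link_capacity} is stated in terms of the initial loads $l_e$. I would therefore distinguish notationally between the initial load $l_e^{(0)}$ and the running load $l_e^{(j)}$ after the $j$-th commitment, and prove the telescoping identity $l_e^{(j)}=l_e^{(0)}+\sum_{\text{committed }(k,p)\ni e} b_k$ alongside the invariant.
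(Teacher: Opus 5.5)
Your proposal is correct and is essentially the paper's proof. Assumption~2 gives path validity and latency, induction over admitted requests from the feasible initial state of Assumption~1 preserves $l_e\le C_e$ through the check \eqref{eq:trusted_capacity_check}, and the allocator's commit-at-most-one-path semantics gives single-path assignment. Your telescoping identity linking running loads to \eqref{eq:link_capacity}, and your framing via arbitrary fixed orderings, only make explicit what the paper leaves implicit.

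One small misattribution: uniqueness of the committed path follows from the allocator's stop-at-first-commit semantics, which you also cite. It does not follow from Assumption~3, which restricts only \emph{which} paths may be committed, not how many.
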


\textbf{Proof.} By Assumption~2, every path exposed to a generated algorithm is already
a valid, loop-free path connecting the corresponding source and
destination and satisfying its latency bound. Hence, path validity and
latency feasibility cannot be violated by path ranking.

Consider the sequence $\sigma_A$ of requests generated by
$h_{\mathrm{ord}}$. The initial state is feasible by Assumption~1.
Before admitting any request $k$ on path $p$, the trusted allocator
checks \eqref{eq:trusted_capacity_check}. Therefore, if all link loads
satisfy their capacity limits before the admission, they continue to
satisfy them after the atomic update in
\eqref{eq:atomic_state_update}. By induction over all admitted requests,
link-capacity feasibility is preserved throughout the allocation
sequence.

Finally, the allocator commits at most one candidate path for each
request, while rejected requests receive no path. Hence, the
single-path-assignment condition is also satisfied.

Proposition~\ref{prop:feasibility} establishes
\emph{feasibility preservation}, not optimality or complete correctness
of the generated algorithm. A candidate may still make poor admission
choices, create unnecessarily high utilization, produce excessive
reconfiguration, disadvantage lower-priority services, or generalize
poorly to unseen network states. These properties cannot be guaranteed
by the trusted allocator alone. They are therefore evaluated through the
verification-guided evolutionary process introduced in
Section~\ref{sec:verified_evolution}.

\section{Verification-Guided LLM Evolution}
\label{sec:verified_evolution}

Section~\ref{sec:representation} defines the bounded algorithm space
within which an LLM is permitted to modify transport-resource allocation
logic. We now describe how candidate algorithms are generated, verified,
evaluated, and iteratively improved. The key design principle is that
the LLM does not directly optimize a single network instance or issue
network-control actions. Instead, it acts as a semantic variation
operator over a bounded algorithm representation, while deterministic
verification and evaluation components determine whether the resulting
candidate is admissible and useful.

Fig.~\ref{fig:vera_loop} summarizes the discovery loop. Starting from
the compiled intent $\mathcal{I}_{\mathrm{IR}}$, a set of seed
algorithms, and a collection of discovery scenarios, the LLM proposes
new request-ordering and path-ranking programs. Each candidate first
passes structural and runtime verification. Valid candidates are then
executed through the trusted allocator defined in
Section~\ref{subsec:trusted_allocator}. Their behavior is evaluated
against the intent objectives, and, for tractable instances, compared
with the exact optimization oracle in
Section~\ref{subsec:reference_optimization}. Instead of returning only
a scalar fitness value, the evaluator extracts structured failure
witnesses identifying where and why a candidate underperforms. These
witnesses are supplied to subsequent LLM calls to guide mutation,
recombination, or repair.

\begin{figure}[htbp]
    \centering
    \begin{tikzpicture}[
        >=Stealth,
        node distance=4mm,
        every node/.style={font=\scriptsize},
        box/.style={draw, thick, rounded corners=2pt, align=center,
                    minimum width=0.62\linewidth, minimum height=11mm, inner sep=2pt},
        lbl/.style={font=\tiny},
    ]
    \node[box, fill=green!16] (compile)
        {1. Compile intent\\objective, constraints,\\allowed features};
    \node[box, fill=violet!18, below=of compile] (generate)
        {2. Generate candidate\\ordering + path-score\\functions};
    \node[box, fill=red!14, below=of generate] (verify)
        {3. Verify program\\syntax, AST, types,\\runtime bounds};
    \node[box, fill=green!16, below=of verify] (evaluate)
        {4. Evaluate solution\\exact gap, feasibility,\\robustness};
    \node[box, fill=green!16, below=of evaluate] (feedback)
        {5. Construct feedback\\counterexample +\\performance summary};
    \node[box, fill=orange!20, below=of feedback] (select)
        {6. Select and evolve\\Pareto archive, diversity,\\mutation / recombination};
    \node[box, fill=blue!12, minimum height=8mm, below=8mm of select] (final)
        {Final candidate\\(validation-selected)};

    \draw[->] (compile) -- (generate);
    \draw[->] (generate) -- (verify);
    \draw[->] (verify) -- (evaluate);
    \draw[->] (evaluate) -- (feedback);
    \draw[->] (feedback) -- (select);
    \draw[->, dashed] (select) --
        node[lbl, right, align=left]{budget exhausted\\/ converged} (final);

    \draw[->, thick, violet!60!black]
        (select) to[out=15, in=-15, looseness=1.8]
        node[lbl, pos=0.5, left, align=left, text=violet!60!black]
        {parents + witnesses\\(next generation)}
        (generate);
    \end{tikzpicture}
    \caption{Verification-guided discovery loop. Candidate generation
    (violet) is untrusted; deterministic verification (red) and
    evaluation (green) determine promotion into the archive (orange).
    Elite and structurally diverse archive members are returned, together
    with structured failure witnesses, to generate the next round of
    candidates. The loop terminates when the search budget is exhausted
    or performance converges, yielding the validation-selected final
    candidate.}
    \label{fig:vera_loop}
\end{figure}

\subsection{Discovery Scenario Partitioning}
\label{subsec:scenario_partition}

To separate algorithm discovery from final performance assessment, the
available scenario set is partitioned as
\begin{equation}
    \mathcal{S}
    =
    \mathcal{S}_{\mathrm{train}}
    \cup
    \mathcal{S}_{\mathrm{val}}
    \cup
    \mathcal{S}_{\mathrm{test}},
    \label{eq:scenario_partition}
\end{equation}
with pairwise-disjoint subsets. A scenario $s\in\mathcal{S}$ contains
a topology, link state, transport-request set, active intent, and, where
applicable, failure or telemetry perturbations.

The evolutionary loop receives performance feedback only from
$\mathcal{S}_{\mathrm{train}}$. The validation set
$\mathcal{S}_{\mathrm{val}}$ is used to select the final algorithm from
the discovery archive, while $\mathcal{S}_{\mathrm{test}}$ remains
unseen until the final experimental evaluation. In cross-topology
experiments, complete topology families rather than only traffic seeds
are held out, preventing the LLM from adapting its algorithm to the
structure of the test graph.

We further define
\begin{equation}
    \mathcal{S}^{\mathrm{oracle}}
    \subseteq
    \mathcal{S}
\end{equation}
as the subset of scenarios for which the reference optimization problem
can be solved to certified optimality within the prescribed solver
budget. Oracle results from
$\mathcal{S}^{\mathrm{oracle}}_{\mathrm{train}}$ may be used to generate
discovery feedback, whereas oracle results from held-out test scenarios
are used only for final reporting.

\subsection{LLM-Based Candidate Generation}
\label{subsec:llm_generation}

Let $\mathcal{P}^{(g)}$ denote the population of candidate algorithms at
generation $g$. Each candidate
\begin{equation}
    A_i^{(g)}
    =
    \left(
    h_{\mathrm{ord},i}^{(g)},
    h_{\mathrm{path},i}^{(g)}
    \right)
\end{equation}
is represented as a typed expression tree conforming to the domain-
specific language in \eqref{eq:dsl_operators}.

The LLM receives a constrained generation context
\begin{equation}
    \mathcal{C}_{\mathrm{LLM}}
    =
    \left(
    \mathcal{I}_{\mathrm{IR}},
    \mathcal{G}_{\mathrm{DSL}},
    A_{\mathrm{parent}},
    \mathbf{M}_{\mathrm{parent}},
    \mathcal{W}_{\mathrm{parent}}
    \right),
    \label{eq:llm_context}
\end{equation}
where $\mathcal{G}_{\mathrm{DSL}}$ is the permitted grammar,
$A_{\mathrm{parent}}$ contains one or more parent algorithms,
$\mathbf{M}_{\mathrm{parent}}$ summarizes their measured performance,
and $\mathcal{W}_{\mathrm{parent}}$ contains selected failure witnesses
defined in Section~\ref{subsec:failure_witness}.

The LLM is instructed to return a structured abstract syntax tree (AST)
rather than unrestricted executable source code. A candidate output has
the conceptual structure
\begin{equation}
\begin{split}
A =
\{&
\texttt{request\_ordering}: \mathcal{T}_{\mathrm{ord}},\\
&
\texttt{path\_ranking}: \mathcal{T}_{\mathrm{path}}
\},
\end{split}
\label{eq:generated_ast}
\end{equation}
where $\mathcal{T}_{\mathrm{ord}}$ and
$\mathcal{T}_{\mathrm{path}}$ are typed expression trees composed only
of operators and features allowed by
$\mathcal{I}_{\mathrm{IR}}$.

Candidate generation supports the following bounded transformations:
\begin{enumerate}
    \item \emph{parameter mutation}: modification of constants,
    coefficients, or thresholds;
    \item \emph{feature mutation}: insertion, removal, or replacement
    of permitted request/path features;
    \item \emph{operator mutation}: replacement of an arithmetic,
    comparison, aggregation, or conditional operator;
    \item \emph{branch mutation}: addition, removal, or modification of
    a bounded conditional branch;
    \item \emph{subtree recombination}: combination of compatible
    subexpressions from two parent algorithms; and
    \item \emph{repair}: targeted modification of an algorithm in
    response to a structured failure witness.
\end{enumerate}

This representation allows structural algorithm variation while
preventing the LLM from generating arbitrary network-controller code.
The originating LLM is not required when a candidate is subsequently
executed because the accepted AST is serialized and replayed
deterministically.

\subsection{Structural and Runtime Verification}
\label{subsec:program_verification}

Every generated candidate is verified before it is exposed to a network
scenario. We distinguish structural verification from network
feasibility preservation.

Let
\begin{equation}
    \mathcal{V}_{\mathrm{stat}}(A)\in\{0,1\}
\end{equation}
denote the static program verifier. A candidate is accepted by
$\mathcal{V}_{\mathrm{stat}}$ only if all of the following conditions
hold:
\begin{enumerate}
    \item the AST conforms to the declared schema and contains only
    whitelisted DSL operators;
    \item all referenced features belong to $\mathcal{F}$;
    \item expression inputs and outputs satisfy their declared types;
    \item constants and thresholds lie inside predefined numerical
    bounds;
    \item the AST depth and total number of nodes do not exceed the
    configured complexity limits; and
    \item no prohibited capabilities, including external calls, file
    access, mutable network state, recursion, or unbounded iteration,
    are present.
\end{enumerate}

For structurally valid candidates, a trusted compiler converts the AST
into an executable scoring function. The resulting program is evaluated
inside a restricted runtime environment. Define
\begin{equation}
    \mathcal{V}_{\mathrm{run}}(A)\in\{0,1\}
\end{equation}
as the runtime verifier. It evaluates the candidate on a set
$\mathcal{S}_{\mathrm{micro}}$ of adversarial micro-instances designed
to expose invalid numerical or computational behavior:
\begin{equation}
\begin{aligned}
\mathcal{V}_{\mathrm{run}}(A)=1\iff{}&
\text{outputs are finite and deterministic},\\
&\text{with no exception or budget violation}.
\end{aligned}
\label{eq:runtime_verifier}
\end{equation}

A candidate is executable only if
\begin{equation}
    \mathcal{V}_{\mathrm{prog}}(A)
    =
    \mathcal{V}_{\mathrm{stat}}(A)
    \mathcal{V}_{\mathrm{run}}(A)
    =
    1.
    \label{eq:program_acceptance}
\end{equation}

Program verification and network feasibility serve different purposes.
Equation~\eqref{eq:program_acceptance} determines whether generated
decision logic is safe to execute, whereas
Proposition~\ref{prop:feasibility} guarantees that a program executed
through the trusted allocator cannot directly violate the modeled
capacity, latency, path-validity, or single-path constraints.

\begin{table}[t]
\centering
\caption{Verification and Evaluation Stages in VERA-TN}
\label{tab:verification_stages}
\scriptsize
\begin{tabular}{p{0.18\columnwidth}
                p{0.27\columnwidth}
                p{0.37\columnwidth}}
\hline
\textbf{Stage} &
\textbf{Purpose} &
\textbf{Typical feedback} \\
\hline
Static verifier &
Check DSL/schema/type and complexity rules &
Invalid operator, undeclared feature, excessive depth \\
Runtime verifier &
Detect numerical and execution failures &
NaN/Inf, timeout, exception, nondeterminism \\
Trusted allocator &
Preserve network feasibility &
Rejected path/request and responsible resource \\
Scenario evaluator &
Measure intent performance &
Admission loss, congestion, reconfiguration, runtime \\
Exact oracle &
Assess solution quality on tractable cases &
Optimality gap and differing admission/path decisions \\
NDT check &
Assess robustness &
Failure sensitivity, SLA violation, recovery cost \\
\hline
\end{tabular}
\end{table}

\subsection{Scenario-Based Performance Evaluation}
\label{subsec:candidate_evaluation}

For every executable candidate $A$ and discovery scenario
$s\in\mathcal{S}_{\mathrm{train}}$, the trusted allocator produces
\begin{equation}
    X_{A,s}
    =
    \mathcal{T}(A,s),
    \label{eq:trusted_execution}
\end{equation}
where $X_{A,s}$ contains admission decisions, selected paths, resulting
link states, and an execution trace recording the ordering and ranking
decisions made by $A$.

The normalized priority-weighted admission utility is
\begin{equation}
    U_{\mathrm{adm}}(A,s)
    =
    \frac{
        \sum_{k\in\mathcal{R}_s}\pi_k a_k^{A,s}
    }{
        \sum_{k\in\mathcal{R}_s}\pi_k
    }.
    \label{eq:admission_utility}
\end{equation}

Peak post-allocation utilization is
\begin{equation}
    U_{\max}(A,s)
    =
    \max_{e\in\mathcal{E}_s}
    \frac{l_e^{A,s}}{C_e}.
    \label{eq:evaluation_max_utilization}
\end{equation}

When a scenario contains an existing allocation, reconfiguration cost is
represented as
\begin{equation}
    C_{\mathrm{reconf}}(A,s)
    =
    \sum_{k\in\mathcal{R}_s}
    \mathbf{1}
    \left[
    p_k^{A,s}
    \neq
    p_k^{\mathrm{prev}}
    \right],
    \label{eq:reconfiguration_metric}
\end{equation}
where the indicator is evaluated only for previously admitted services.

For dynamic or network-digital-twin scenarios, the realized SLA
violation ratio is
\begin{equation}
    V_{\mathrm{SLA}}(A,s)
    =
    \frac{
        \sum_{k\in\mathcal{R}_s}
        \mathbf{1}
        \left[
        D_{k}^{\mathrm{realized}}
        >
        L_k^{\max}
        \right]
    }{
        \max(1,|\mathcal{R}_s^{\mathrm{adm}}|)
    }.
    \label{eq:sla_violation_ratio}
\end{equation}

The analytical model does not use
$V_{\mathrm{SLA}}$ as an independent feasibility metric because
latency-infeasible paths have already been removed. Instead,
\eqref{eq:sla_violation_ratio} captures model mismatch and dynamic
behavior observed in an emulated or digital-twin environment.

Candidate execution time is denoted by
$T_{\mathrm{exec}}(A,s)$. The scenario-level metric vector is therefore
\begin{equation}
\mathbf{m}(A,s)
=
\left[
U_{\mathrm{adm}},
-
V_{\mathrm{SLA}},
-
U_{\max},
-
C_{\mathrm{reconf}},
-
T_{\mathrm{exec}}
\right],
\label{eq:scenario_metric_vector}
\end{equation}
with inactive metrics omitted when they are not specified by the
corresponding intent.

Rather than imposing a universal weighted sum, VERA-TN derives the
candidate ordering from the objective hierarchy $\mathcal{O}$ in
$\mathcal{I}_{\mathrm{IR}}$. For a scenario set
$\mathcal{S}'$, let
\begin{equation}
    \mathbf{M}(A;\mathcal{S}')
    =
    \operatorname{Aggregate}_{s\in\mathcal{S}'}
    \mathbf{m}(A,s)
    \label{eq:aggregate_metric}
\end{equation}
denote the corresponding aggregate metric vector. Candidate algorithms
are compared using the lexicographic or Pareto ordering specified by
$\mathcal{O}$, while program validity always dominates performance.

\subsection{Exact-Oracle Comparison}
\label{subsec:oracle_feedback}

For scenarios in $\mathcal{S}^{\mathrm{oracle}}$, the optimization
problem in Section~\ref{subsec:reference_optimization} provides an
external quality reference independent of the LLM.

For scenario $s$, let $J^\star_s$ denote the certified optimal objective
and $J(A,s)$ the corresponding objective obtained by candidate $A$.
The candidate's relative optimality loss is
\begin{equation}
    \mathrm{Gap}(A,s)
    =
    \frac{
        J^\star_s-J(A,s)
    }{
        \max(|J^\star_s|,\epsilon)
    }.
    \label{eq:scenario_optimality_gap}
\end{equation}

The oracle is used for two purposes. First, it provides an objective
benchmark for evaluating whether discovered algorithms approach the
quality of exact optimization. Second, on training instances only, the
difference between the candidate and oracle allocation can be converted
into structured feedback identifying decisions responsible for the
quality loss.

If the solver terminates without a certificate of optimality, its
incumbent solution is not denoted by $J^\star_s$. Instead, the incumbent
and solver bound are recorded separately, and the corresponding
instance is excluded from exact optimality-gap claims.

\subsection{Structured Failure Witnesses}
\label{subsec:failure_witness}

A scalar objective reveals whether a candidate performs poorly but
provides little information about how its decision logic should change.
VERA-TN therefore converts verification and evaluation failures into a
structured failure witness
\begin{equation}
    w
    =
    \left(
    s,
    \zeta,
    \mathcal{R}^{w},
    \mathcal{E}^{w},
    y_{\mathrm{obs}},
    y_{\mathrm{ref}},
    \xi
    \right),
    \label{eq:failure_witness}
\end{equation}
where $s$ identifies the scenario, $\zeta$ is the failure type,
$\mathcal{R}^{w}$ and $\mathcal{E}^{w}$ contain affected requests and
links, $y_{\mathrm{obs}}$ is the observed candidate behavior,
$y_{\mathrm{ref}}$ is a reference behavior when available, and $\xi$
is a compact deterministic explanation generated by the evaluator.

Witnesses can originate from several sources.

\emph{Program witness:} a candidate violates the DSL, produces a
non-finite score, times out, or raises an exception.

\emph{Admission-regret witness:} a high-priority request is rejected
while an oracle or stronger reference allocation admits it. For an
oracle-solvable scenario, request-level admission regret is
\begin{equation}
    \Delta_k^{\mathrm{adm}}
    =
    a_k^\star-a_k^{A},
    \label{eq:admission_regret}
\end{equation}
and requests with
$\Delta_k^{\mathrm{adm}}=1$ are traced back to the ordering and
path-ranking decisions that exhausted the required resources.

\emph{Congestion witness:} the candidate creates a substantially larger
peak utilization than a reference solution. The most critical link is
\begin{equation}
    e^\dagger
    =
    \arg\max_{e\in\mathcal{E}}
    \frac{l_e^A}{C_e}.
    \label{eq:critical_link}
\end{equation}
The witness includes the requests routed through $e^\dagger$ and the
alternative paths that were available when those decisions were made.

\emph{Robustness witness:} a failure, demand shift, or telemetry
perturbation causes an excessive loss in admission, SLA compliance, or
recovery performance.

\emph{Reconfiguration witness:} the candidate achieves only a marginal
utility improvement while causing excessive path churn.

The evaluator ranks witnesses according to their effect on the current
intent hierarchy and retains at most $N_w$ representative witnesses per
parent. This limits prompt length and prevents large scenario traces from
being passed directly to the LLM.

We use the term \emph{failure witness} rather than claiming a formal
counterexample in the program-verification sense. These witnesses guide
evolution by identifying observed deficiencies over the evaluated
scenario set; they do not constitute a proof that the candidate fails
for every network state.

\subsection{Verification-Guided Repair and Evolution}
\label{subsec:evolution}

At generation $g$, the evaluator first removes candidates for which
$\mathcal{V}_{\mathrm{prog}}(A)=0$. Remaining candidates are ranked
according to their training performance under the intent-defined
ordering.

Let
\begin{equation}
    \mathcal{E}^{(g)}
    \subseteq
    \mathcal{P}^{(g)}
\end{equation}
denote the elite subset. Parent algorithms are selected from both
high-performing and structurally diverse candidates. To prevent the
population from collapsing to nearly identical expression trees, we
associate each candidate with a structural signature consisting of the
set of features, operators, thresholds, and branch structures appearing
in its AST.

For two candidate algorithms $A_i$ and $A_j$, a normalized structural
diversity measure can be expressed as
\begin{equation}
d_{\mathrm{str}}(A_i,A_j)
=
\lambda
d_{\mathrm{tree}}(A_i,A_j)
+
(1-\lambda)
\left[
1-
\frac{
|\Psi_i\cap\Psi_j|
}{
|\Psi_i\cup\Psi_j|
}
\right],
\label{eq:structural_diversity}
\end{equation}
where $d_{\mathrm{tree}}$ is normalized tree-edit distance,
$\Psi_i$ and $\Psi_j$ are the sets of operators and features used by the
two candidates, and $\lambda\in[0,1]$ controls the relative contribution
of tree structure and feature/operator overlap.

The next candidate population is generated by
\begin{equation}
    \mathcal{P}^{(g+1)}
    =
    \operatorname{Select}
    \left(
    \mathcal{E}^{(g)}
    \cup
    \mathcal{C}^{(g)};
    \mathcal{O},
    d_{\mathrm{str}}
    \right),
    \label{eq:survivor_selection}
\end{equation}
where $\mathcal{C}^{(g)}$ denotes the set of newly generated children.
Selection first enforces program validity, then ranks candidates by the
intent-defined objective ordering, while reserving part of the
population for structurally distinct candidates above a minimum
diversity threshold.

For repair-oriented generation, the LLM receives the parent algorithm
together with a bounded set of witnesses and is explicitly instructed
to modify only the decision logic relevant to the observed deficiency.
For example, if a high-priority request is rejected because earlier
low-priority requests consume a bottleneck link, the feedback identifies
the affected requests, the bottleneck, and the parent ranking behavior.
The LLM may respond by changing request ordering, introducing a
conditional rule for tight resources, or modifying path ranking to
preserve bottleneck capacity.

This process is \emph{counterexample-guided in spirit}, but we do not
claim equivalence to formal counterexample-guided inductive synthesis
(CEGIS). The feedback is obtained from finite network scenarios and
optimization/digital-twin evaluation rather than from exhaustive
symbolic verification of the complete program state space.

\begin{algorithm}[t]
\caption{Verification-Guided LLM Algorithm Evolution}
\label{alg:vera}
\begin{algorithmic}[1]
\Require Intent IR $\mathcal{I}_{\mathrm{IR}}$,
training scenarios $\mathcal{S}_{\mathrm{train}}$,
validation scenarios $\mathcal{S}_{\mathrm{val}}$,
population size $N$,
generation limit $G_{\max}$,
evaluation budget $B_{\max}$
\Ensure Selected transport-allocation algorithm $A^\star$

\State Initialize seed population $\mathcal{P}^{(0)}$
\State $\mathcal{H}\gets\emptyset$ \Comment{archive}
\State $B\gets 0$

\For{$g=0,\ldots,G_{\max}-1$}

    \ForAll{$A\in\mathcal{P}^{(g)}$}

        \If{$\mathcal{V}_{\mathrm{stat}}(A)=0$}
            \State Record structural witness for $A$
            \State \textbf{continue}
        \EndIf

        \If{$\mathcal{V}_{\mathrm{run}}(A)=0$}
            \State Record runtime witness for $A$
            \State \textbf{continue}
        \EndIf

        \State Execute $A$ using trusted allocator on
        $\mathcal{S}_{\mathrm{train}}$
        \State Compute $\mathbf{M}(A;\mathcal{S}_{\mathrm{train}})$
        \State Compare with exact oracle where available
        \State Extract bounded witness set $\mathcal{W}_A$
        \State Add $(A,\mathbf{M},\mathcal{W}_A)$ to archive $\mathcal{H}$
        \State $B\gets B+|\mathcal{S}_{\mathrm{train}}|$

        \If{$B\geq B_{\max}$}
            \State \textbf{break}
        \EndIf

    \EndFor

    \If{$B\geq B_{\max}$}
        \State \textbf{break}
    \EndIf

    \State Select elite and diversity-preserving parents
    \State $\mathcal{C}^{(g)}\gets\emptyset$

    \While{$|\mathcal{C}^{(g)}|<N$}
        \State Select parent(s) $A_p$ from $\mathcal{P}^{(g)}$
        \State Build constrained LLM context using
        $\mathcal{I}_{\mathrm{IR}}$, $A_p$, metrics, and witnesses
        \State Generate mutation, recombination, or repair $A_c$
        \State Deduplicate using canonical AST hash
        \State $\mathcal{C}^{(g)}\gets\mathcal{C}^{(g)}\cup\{A_c\}$
    \EndWhile

    \State Form $\mathcal{P}^{(g+1)}$ according to
    \eqref{eq:survivor_selection}

\EndFor

\State Evaluate non-dominated archive candidates on
$\mathcal{S}_{\mathrm{val}}$
\State Select $A^\star$ using the intent-defined validation ordering
\State \Return $A^\star$
\end{algorithmic}
\end{algorithm}

\subsection{Initialization, Termination, and Final Selection}
\label{subsec:termination}

The initial population combines human-designed seeds and LLM-generated
candidates. Human seeds include shortest-path, latency-aware,
bandwidth-aware, and load-balancing heuristics. A restricted linear
candidate of the form
\begin{equation}
h(\mathbf{f})
=
\boldsymbol{\theta}^{\mathsf{T}}\mathbf{f}
\label{eq:linear_seed}
\end{equation}
is also included. This candidate is important as an ablation because it
isolates coefficient optimization from structural algorithm discovery.

Discovery terminates when any of the following conditions is reached:
\begin{enumerate}
    \item the maximum generation count $G_{\max}$ is reached;
    \item the candidate-evaluation budget $B_{\max}$ is exhausted; or
    \item no validation-eligible improvement is observed for
    $G_{\mathrm{stall}}$ consecutive generations.
\end{enumerate}

The test set is not used to determine the stopping point. After
evolution, a compact set of non-dominated candidates from the archive
$\mathcal{H}$ is evaluated on
$\mathcal{S}_{\mathrm{val}}$. The final algorithm is selected as
\begin{equation}
    A^\star
    =
    \arg\max_{A\in\mathcal{H}_{\mathrm{valid}}}
    F
    \left(
    A
    \mid
    \mathcal{I},
    \mathcal{S}_{\mathrm{val}}
    \right).
    \label{eq:final_algorithm_selection}
\end{equation}

Only after $A^\star$ has been fixed is it evaluated on
$\mathcal{S}_{\mathrm{test}}$. The LLM therefore receives neither the
test scenarios nor their performance traces during discovery.

\subsection{Computational Complexity and Deployment Cost}
\label{subsec:complexity}

Let $R=|\mathcal{R}|$ denote the number of transport requests,
$K$ the maximum number of candidate paths per request, and $H$ the
maximum number of links in a candidate path. For an already discovered
algorithm, request sorting requires
$\mathcal{O}(R\log R)$ operations. Ranking at most $K$ candidate paths
for every request requires
$\mathcal{O}(RK\log K)$ score comparisons, while feasibility checking
over at most $H$ links per candidate contributes
$\mathcal{O}(RKH)$.

The resulting trusted online-allocation complexity is therefore
\begin{equation}
\mathcal{O}
\left(
R\log R
+
RK\log K
+
RKH
\right).
\label{eq:online_complexity}
\end{equation}

Importantly, this online execution does not require an LLM call. LLM
inference is confined to the offline or near-line discovery process.

Let $N$ denote population size, $G$ the number of generations, and
$S_{\mathrm{tr}}=|\mathcal{S}_{\mathrm{train}}|$. Excluding exact-oracle
computation, discovery requires at most
\begin{equation}
    \mathcal{O}
    \left(
    NGS_{\mathrm{tr}}
    \left[
    R\log R+
    RK\log K+
    RKH
    \right]
    \right)
    \label{eq:discovery_complexity}
\end{equation}
candidate-scenario evaluation work, in addition to the LLM-generation
calls.

Exact optimization is intentionally excluded from the online complexity
in \eqref{eq:online_complexity}. The reference problem includes
unsplittable path admission and can become combinatorial with network
size; it is therefore used as a discovery/evaluation oracle only where
a certified solution can be obtained within the configured solver
budget.

\subsection{Reproducibility of the Discovery Process}
\label{subsec:discovery_reproducibility}

For every LLM-generated candidate, VERA-TN records the model identifier
and version, generation parameters, prompt-template version, parent
candidate hashes, generation index, mutation type, serialized AST,
structural-verification result, runtime-verification result, scenario
metrics, failure witnesses, token usage, wall-clock generation time,
and random seed where applicable.

Accepted and rejected candidates are both retained. This is important
because reporting only successful programs would obscure the reliability
and computational cost of LLM-based discovery. The complete archive
therefore enables measurement of
\begin{equation}
\begin{aligned}
\mathcal{M}_{\mathrm{repro}}=\{&\text{valid-program and runtime-success rates},\\
&\text{time to first/best algorithm},\\
&\text{evaluations to convergence},\\
&\text{LLM token and compute cost}\}.
\end{aligned}
\label{eq:discovery_repro_metrics}
\end{equation}

Because the executable representation of every accepted candidate is
stored independently of the originating language model, final transport
allocation experiments can be deterministically replayed without
re-querying the LLM. This separation permits the stochasticity of
algorithm discovery to be analyzed independently from the runtime
behavior of the discovered network algorithm.

\section{Implementation and Reproducibility}
\label{sec:prototype}

This section separates the reference architecture in
Sections~\ref{sec:representation}--\ref{sec:verified_evolution} from the
released proof-of-concept. The architecture specifies typed AST generation,
static checks, sandboxed execution, and model-connected repair. The current
artifact implements the trusted allocator, exact oracle, deterministic
scenario generator, bounded candidate evaluator, numerical evolutionary
search, perturbation tests, and result archive. It does not execute arbitrary
model-generated Python, issue live LLM calls, or establish a multi-provider
typed-AST comparison. This distinction keeps the implementation claim aligned
with the released code while preserving the full design contract for the
next model-connected stage.

\subsection{Topology and Workload}
\label{subsec:topology}
TEFNET24 motivates the hierarchical structure of the representative graph
\cite{tefnet24}. The public reference describes a multi-layer packet-optical
network inspired by operational Telef\'onica deployments; its data interface
is available separately \cite{tefnetdata}. The present artifact instead uses
the deterministic 28-node, 54-link graph in Fig.~\ref{fig:topology}, which
preserves access aggregation, metro rings, redundant interconnection, dual
homing, and a connected national/core layer. Capacities and delays are
generated once per seed from tier-specific ranges. 

\begin{figure}[htbp]
    \centering
    \includegraphics[width=.5\textwidth]{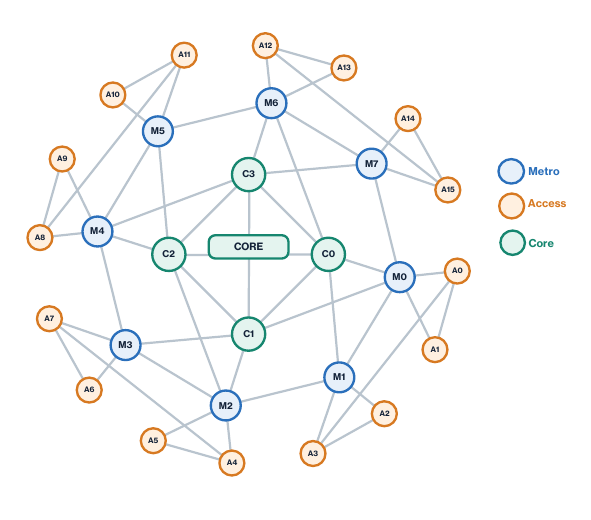}
    \caption{Representative TEFNET24-derived hierarchy used by the released
    proof-of-concept with 28 nodes, 54 bidirectional links.
 Capacities, delays, and workloads are deterministic
    experimental inputs rather than operator measurements.}
    \label{fig:topology}
\end{figure}

Each request contains ingress, egress, bandwidth, latency bound, and an
integer service priority. Up to four simple paths are generated
deterministically and paths exceeding the latency bound are removed before
ranking. Evaluation loads scale nominal demands by factors
$\{1.0, 1.4, 1.8, 2.2, 2.5\}$. Perturbed scenarios remove links and vary capacities
using archived seeds.

\subsection{Released Prototype}
\label{subsec:prototype_architecture}
The numerical candidate is a ten-parameter vector controlling request
priority/order and path features including projected utilization, latency,
residual slack, and risk. Evolution uses a population of 24 candidates for
14 generations. A performance-only search is evaluated on nominal scenarios;
the VERA-TN prototype additionally uses perturbed scenarios and an overload
penalty. Both execute through the same trusted allocator, so candidate logic
returns scores but never commits resource state.

The exact reference is the path-based binary model in
\eqref{eq:reference_objective}--\eqref{eq:binary_variables}, solved with
SciPy's MILP interface. The implementation records solver success, status,
objective, message, and MIP gap. A row is called oracle-optimal only when the
solver reports success and zero recorded gap within numerical tolerance.

\begin{table}[!tb]
\centering
\caption{Released Proof-of-Concept Configuration}
\label{tab:prototype_setup}
\scriptsize
\begin{tabular}{p{0.29\columnwidth}p{0.58\columnwidth}}
\toprule
\textbf{Item} & \textbf{Configuration} \\
\midrule
Topology & 28 nodes, 54 bidirectional links \\
Candidate paths & Up to $K=4$, loop-free and latency-feasible \\
Candidate representation & Bounded ten-parameter numerical vector \\
Evolution & Population 24; 14 generations \\
Oracle & Path-based binary MILP; recorded certificate status \\
Validation & Nominal plus analytical link/capacity perturbations \\
Evaluation seeds & 30 per load; 30 failed-link cases; 10 for K sensitivity \\
Online LLM inference & None \\
\bottomrule
\end{tabular}
\end{table}


\section{Evaluation}
\label{sec:evaluation}

\subsection{Scope, Questions, and Metrics}
\label{subsec:research_questions}
The study asks seven questions. \textbf{RQ1} measures priority utility under
increasing contention. \textbf{RQ2} tests whether evolution improves on
priority-greedy routing and equal-budget random search. \textbf{RQ3}
compares congestion with MILP-C, which minimizes the same metric at maximum
priority utility. \textbf{RQ4} tests failure-aware search on held-out
failed-link cases. \textbf{RQ5} tests whether three training intents yield
specialized candidates. \textbf{RQ6} tests whether intent behavior transfers
from the official national topology to unseen metro-regional topologies. The synthetic intent-transfer test uses resilience as its third intent, whereas the official-data study uses admitted volume. The two transfer tests are therefore interpreted separately.
\textbf{RQ7} measures $K$ sensitivity and separates online heuristic scaling
from exact-solver tractability.

The primary metric is priority utility divided by MILP-U utility. Secondary
metrics are peak link utilization, accepted requests, the fraction of cases
that reach exact priority utility, rejected priority under failure, and
runtime. All methods use the same path sets and trusted allocator. Capacity
and latency violations are therefore checked as invariants, not treated as
performance metrics.

\subsection{Baselines and Statistical Protocol}

Shortest path uses arrival order and the first feasible path. Load-balanced
KSP ranks paths with a fixed bottleneck-oriented score. Priority-greedy
orders requests by descending priority, breaks ties by ascending bandwidth,
and chooses the path with minimum projected bottleneck utilization.
Equal-budget random search controls for the number of sampled parameter
vectors. Evolutionary search uses the same candidate budget. MILP-U and
MILP-C are offline references, not deployment baselines.

Table values aggregate 150 paired test cases. Mean utility and utilization
use 10,000-resample bootstrap 95\% confidence intervals. Runtime is reported
as median and interquartile range (IQR). Pre-specified paired Wilcoxon tests
compare evolution and random search with priority-greedy, and evolution with
random search. Holm correction is applied across the six utility and
utilization tests. Paired rank-biserial correlation is the effect size.
For the official-data study, each of eight discovery seeds is the inferential
unit. The 16 national or 48 metro scores are first averaged within a seed.
Paired Wilcoxon tests then compare each intent-matched candidate with the two
off-diagonal candidates. Holm correction is applied across six contrasts per
domain, and bootstrap intervals resample the eight seed-level means. This
avoids treating repeated replay scenarios as independent discovery runs.

\subsection{Motivating Toy Example}
\label{subsec:toy_example}
Before presenting aggregate results, we use a small constructed example
to illustrate the mechanism through which request ordering and path
ranking can affect priority utility. Both policies in
Fig.~\ref{fig:toy_routes} operate
on the same topology and candidate paths and are subject to exactly the
same capacity and latency checks. The only difference is the order in
which requests and feasible paths are considered.
Under arrival-order shortest-path allocation, early routing decisions
consume resources that could otherwise support higher-value demand,
yielding total priority utility $6$. The bounded VERA-TN prototype
changes the ordering/ranking decisions and reaches utility $8$, equal
to the reference MILP utility for this constructed instance. Thus, the
illustration isolates the mechanism targeted by algorithm discovery:
the gain arises from how scarce feasible resources are prioritized, not
from relaxing network constraints.


\begin{figure}[!htb]
    \centering
    \includegraphics[width=.493\textwidth]{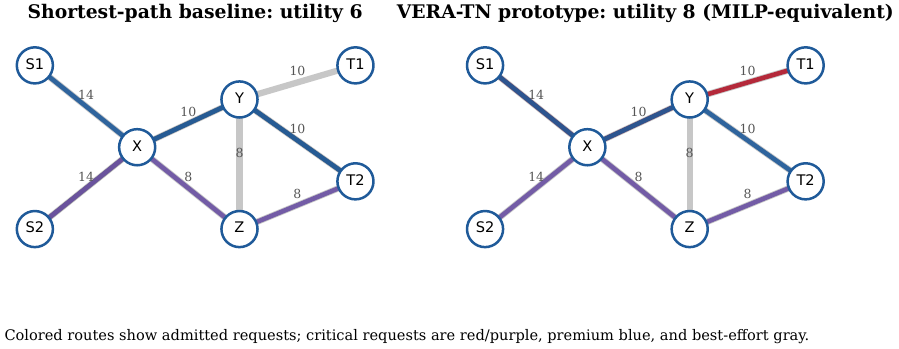}
    \caption{Constructed contention illustration. The example shows why
    request ordering and path ranking matter; it is not part of the
    statistical evaluation.}
    \label{fig:toy_routes}
\end{figure}

\subsection{Priority Utility and Congestion}

Table~\ref{tab:main_results} and Fig.~\ref{fig:evaluation} answer
RQ1--RQ3. The higher demand levels avoid the ceiling effect observed in the
earlier evaluation.
Evolutionary search reaches mean utility ratio 0.958, compared with 0.940
for priority-greedy and 0.952 for equal-budget random search. Its mean
paired gain over priority-greedy is 0.0187 (Holm-adjusted
$p=1.3\times10^{-6}$, rank-biserial $r=0.658$). The gain over random search
is 0.00635 ($p=0.0083$, $r=0.434$). The effect is statistically detectable
but small. Evolution reaches exact priority utility in 50 of 150 cases.

\begin{table*}[!tb]
\centering
\caption{Held-Out Results Across Five Loads and 30 Seeds per Load ($n=150$ per Method)}
\label{tab:main_results}
\scriptsize
\begin{tabular}{lccccc}
\toprule
\textbf{Method} & \textbf{Utility ratio, mean$\pm$SD [95\% CI]} &
\textbf{Peak util.} & \textbf{Accepted} &
\textbf{Runtime ms, median [IQR]} & \textbf{Priority-opt.} \\
\midrule
Shortest path & $0.805\pm0.115$ [0.787,0.823] & 0.974 & 16.71 & 0.198 [0.190,0.207] & 5.3\% \\
Load-balanced KSP & $0.830\pm0.120$ [0.811,0.849] & 0.955 & 17.27 & 1.564 [1.503,1.656] & 13.3\% \\
Priority-greedy & $0.940\pm0.050$ [0.932,0.947] & 0.948 & 17.51 & 0.350 [0.334,0.372] & 20.0\% \\
Equal-budget random & $0.952\pm0.046$ [0.945,0.959] & 0.963 & 17.79 & 1.583 [1.518,1.667] & 26.7\% \\
Evolutionary search & $0.958\pm0.041$ [0.952,0.965] & 0.961 & 17.95 & 1.573 [1.522,1.669] & 33.3\% \\
MILP-U & $1.000\pm0.000$ [1.000,1.000] & 0.977 & 19.03 & 26.358 [17.076,37.689] & 100\% \\
MILP-C & $1.000\pm0.000$ [1.000,1.000] & 0.939 & 19.03 & 84.957 [34.833,176.536] & 100\% \\
\bottomrule
\end{tabular}
\end{table*}

\begin{figure*}[!tb]
    \centering
    \includegraphics[width=.76\textwidth]{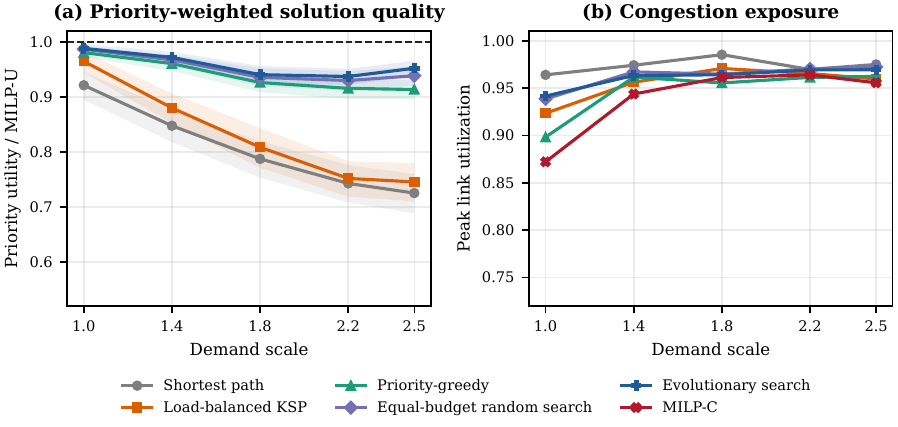}
    \caption{Held-out priority utility and peak utilization. Shaded bands
    in (a) are bootstrap 95\% confidence intervals. MILP-C is the fair
    congestion reference because it preserves MILP-U priority utility.}
    \label{fig:evaluation}
\end{figure*}

The heuristic does not improve congestion over MILP-C. MILP-C reaches mean
peak utilization 0.939; evolutionary search reaches 0.961. Moreover, the
candidate is priority-optimal in only one third of the cases. On those 50
cases, its mean utilization regret relative to MILP-C is 0.058. The earlier
latency-tie-break MILP is not used as a congestion reference.

\subsection{Failure Training and Intent Transfer}

On 30 held-out failed-link scenarios, failure-aware search reaches mean
utility ratio 0.934 and 10th-percentile ratio 0.862. Search without failure
scenarios reaches 0.919 and 0.844, respectively. The paired mean difference
is 0.0151 with bootstrap CI [0.0031,0.0288], but the Wilcoxon test gives
$p=0.051$. This test therefore does not establish a robustness gain. Removing
the utilization feature lowers the mean to 0.914. Removing the nearly zero
slack coefficient has no observed effect.

\begin{figure}[!htb]
    \centering
    \includegraphics[width=.78\columnwidth]{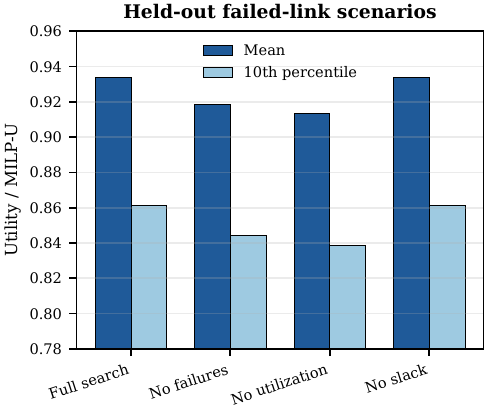}
    \caption{Mean and 10th-percentile utility over 30 held-out failed-link
    cases. Failure-aware search improves the observed mean and tail, but
    the paired Wilcoxon result is marginal.}
    \label{fig:ablation}
\end{figure}

The intent-transfer test shows no intent-specific specialization. Three searches use priority,
congestion, and resilience scores, but the priority-trained vector has the
best held-out score under all three objectives. Fig.~\ref{fig:intent_transfer}
reports regret from the best candidate for each target objective. Different
weights alone do not establish intent-conditioned specialization. The present
evaluation therefore does not support a claim of structural algorithm
discovery.

\begin{figure}[!htb]
    \centering
    \includegraphics[width=.78\columnwidth]{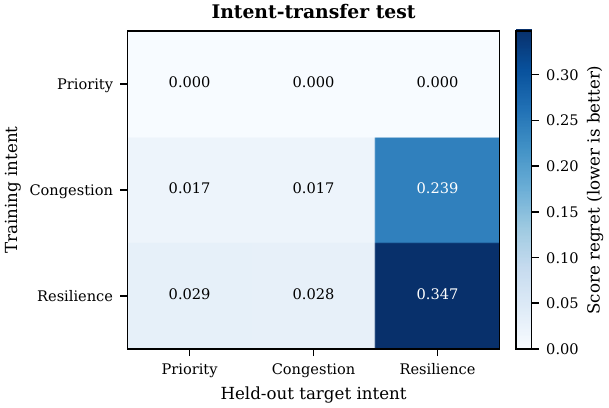}
    \caption{Intent-transfer score regret on 60 held-out nominal and failed
    cases. Zero is best. The diagonal does not dominate, so the numerical
    representation does not validate intent-specific specialization.}
    \label{fig:intent_transfer}
\end{figure}

\subsection{Official Cross-Topology Check}

The official-data study also shows no intent-specific specialization. Discovery on the national graph
does not produce a stable intent-matched diagonal. Across the three national
targets, the largest absolute matched contrast is below 0.0006 and no adjusted
test is significant.

Fig.~\ref{fig:cross_topology} replays all 24 selected candidates without
retraining on 48 workloads from 12 unseen metro components. The
congestion-trained candidates have the best mean score under all three metro
targets. None of the 12 intent-matched contrasts is significant; every
Holm-adjusted $p$-value equals 1.0. Priority-greedy obtains mean priority ratios
0.9785 nationally and 0.6126 on the metro cases, close to 0.9794 and 0.6134 for
the priority-trained candidates. All 1,536 candidate replays satisfy the
trusted capacity and latency checks.

\begin{figure*}[!tb]
    \centering
    \includegraphics[width=.8\textwidth]{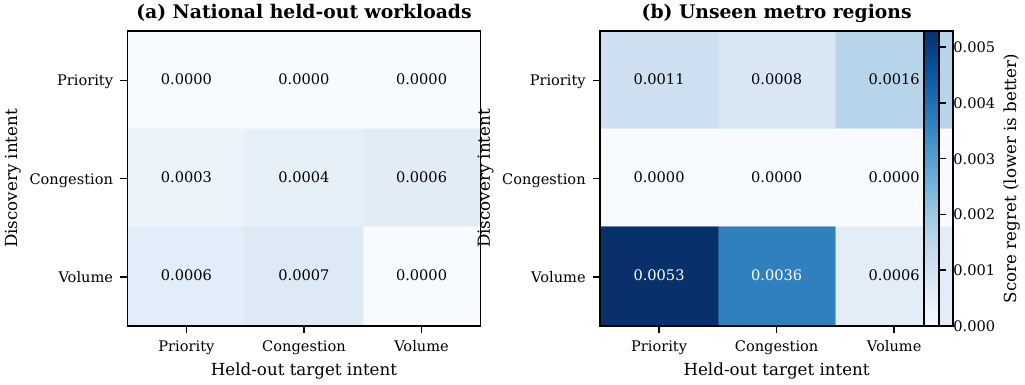}
    \caption{Official TEFNET24 intent-transfer check. Each cell is regret from
    the best mean score for its target objective. Discovery uses eight seeds
    on the national graph; metro evaluation uses 12 unseen components and 48
    workloads without retraining. The diagonal does not dominate in either
    domain. Inferential tests use discovery-seed means, not individual replay
    rows.}
    \label{fig:cross_topology}
\end{figure*}

\subsection{Sensitivity and Tractability}

Increasing $K$ raises median replay time from 0.53 ms at $K=1$ to 4.46 ms
at $K=16$. Mean utility does not improve monotonically: it is 0.974 at
$K=1$, 0.947 at $K=4$, and 0.934 at $K=16$ on the ten sensitivity seeds.
The fixed candidate was selected with $K=4$ and does not automatically use
additional paths well.

\begin{figure}[!htb]
    \centering
    \includegraphics[width=.98\columnwidth]{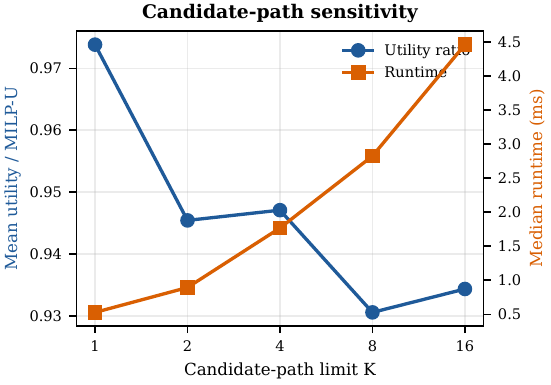}
    \caption{Candidate-path sensitivity over ten seeds. Larger path sets
    increase replay time but do not improve this fixed candidate
    monotonically.}
    \label{fig:k_sensitivity}
\end{figure}

\begin{figure*}[!tb]
    \centering
    \includegraphics[width=.76\textwidth]{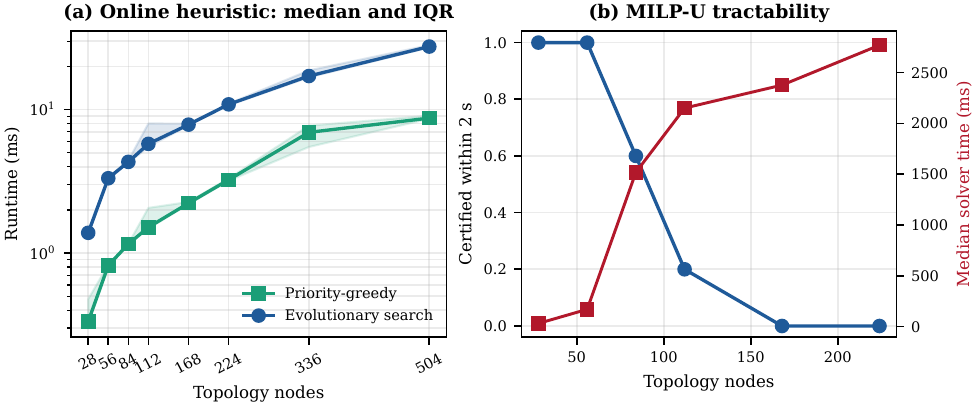}
    \caption{Runtime scaling. (a) Online replay uses 30 repetitions per
    point; bands show IQR. (b) MILP-U reports the certified fraction and
    median elapsed time for five cases per size under a two-second limit.}
    \label{fig:scaling}
\end{figure*}
Fig.~\ref{fig:scaling} separates online replay from exact optimization.
Across 28--504 nodes, fitted runtime exponents are 0.99 for evolutionary
replay and 1.14 for priority-greedy under this joint graph/request scaling.
These are empirical fits, not asymptotic proofs. With a two-second MILP-U
budget, all five cases are certified at 28 and 56 nodes, three of five at
84 nodes, one of five at 112 nodes, and none at 168 or 224 nodes. Uncertified
cases are excluded from Table~\ref{tab:main_results}; the separate curve
shows where the exact reference stops being dependable within that budget.


\section{Discussion}
\label{sec:discussion}

The evidence supports two claims. First, the trust boundary preserves the
modeled allocation constraints independently of the ranking parameters.
Second, under a fixed evaluation budget, numerical evolution gives a small
held-out utility gain over random sampling and a larger gain over the
priority-greedy baseline. The study does not support three stronger claims:
the evolved candidate is not congestion-optimal, failure training is not
significant at $\alpha=0.05$, and intent-specific candidates do not
specialize on held-out tests.

The study has several limitations. It uses batch allocation, analytical failures, one
parameterized representation, and one machine. The official-data check spans
national and metro graphs from one topology family; it does not establish
transfer to an independent operator or network family. Its link capacities are
assumed, and the public traffic matrix determines endpoint marginals rather
than per-request bandwidth. Concurrent arrivals, reconfiguration cost,
digital-twin fidelity, controller behavior, and physical-network effects are
not tested. The typed-AST and LLM extension is specified but not implemented.
A direct comparison with genetic programming on the same grammar is required
before making a structural-search claim.

The next empirical step should be a controlled structural study: implement the
typed grammar, compare GP
and at least two model generators under the same candidate budget, archive
all rejected programs, and evaluate on independent public topology families.
Dynamic multi-epoch tests are also needed before reporting reconfiguration
or operational resilience.

\section{Conclusion}
\label{sec:conclusion}
This paper introduced VERA-TN, a verification-guided architecture for turning
declarative transport intents into bounded, reusable request-ordering and
path-ranking procedures. The central safety property is architectural: a
candidate may rank prevalidated choices, while a trusted allocator retains
path construction, resource accounting, hard-constraint enforcement, and
actuation. Proposition~\ref{prop:feasibility} establishes allocation
feasibility under explicit assumptions, and Lemma~\ref{lem:lexicographic}
gives a sufficient bound for the exact model's latency tie-break.

Across 150 certified held-out cases, numerical evolution reaches mean priority
utility ratio 0.958, compared with 0.952 for equal-budget random search and
0.940 for priority-greedy routing. The gain over random search is small but
statistically detectable. The candidate does not match the congestion oracle,
and its intent-transfer test does not show specialization. Failure-aware search
improves the observed mean and tail, but its paired test is marginal. This
larger study separates statistically supported gains from null results.

Eight repeated searches on the official national graph and replay on 12 unseen
metro-regional graphs also show the limits of the ten-parameter
representation: no stable intent-matched diagonal appears.

The present contribution is the trust boundary and its reproducible numerical
evaluation. It is not evidence that an LLM improves heuristic design. That
question requires a typed structural implementation, equal-budget GP and model
controls, independent topology families, and operational validation.

\section*{Acknowledgment}
This work is partially funded by the Juan de la Cierva (JDC) Grant under reference (JDC2024-053571-I), the SHINE (PID2024-159781OB-I00) project funded by MICIU/AEI/10.13039/501100011033 and by ERDF/EU.


\scriptsize
\bibliographystyle{IEEEtran}
\bibliography{ref}

@article{funsearch,
  author={Romera-Paredes, Bernardino and Barekatain, Mohammadamin and Novikov, Alexander and Balog, Matej and Kumar, M. Pawan and Dupont, Emilien and Ruiz, Francisco J. R. and others},
  title={Mathematical Discoveries from Program Search with Large Language Models},
  journal={Nature},
  volume={625},
  pages={468--475},
  year={2024},
  doi={10.1038/s41586-023-06924-6}
}

@article{alphaevolve,
  author={Novikov, Alexander and V{\~u}, Ng{\^a}n and Eisenberger, Marvin and Dupont, Emilien and Huang, Po-Sen and Wagner, Adam Zsolt and Shirobokov, Sergey and Kozlovskii, Borislav and Ruiz, Francisco J. R. and Mehrabian, Abbas and others},
  title={{AlphaEvolve}: A Coding Agent for Scientific and Algorithmic Discovery},
  journal={arXiv preprint arXiv:2506.13131},
  year={2025}
}

@inproceedings{eoh,
  author={Liu, Fei and Tong, Xialiang and Yuan, Mingxuan and Lin, Xi and Luo, Fu and Wang, Zhenkun and Lu, Zhichao and Zhang, Qingfu},
  title={Evolution of Heuristics: Towards Efficient Automatic Algorithm Design Using Large Language Model},
  booktitle={Proc. 41st Int. Conf. Machine Learning (ICML)},
  volume={235},
  pages={32201--32223},
  year={2024}
}

@inproceedings{reevo,
  author={Ye, Haoran and Wang, Jiarui and Cao, Zhiguang and Berto, Federico and Hua, Chuanbo and Kim, Haeyeon and Park, Jinkyoo and Song, Guojie},
  title={{ReEvo}: Large Language Models as Hyper-Heuristics with Reflective Evolution},
  booktitle={Advances in Neural Information Processing Systems},
  volume={37},
  year={2024}
}

@article{llamea,
  author={van Stein, Niki and B{\"a}ck, Thomas},
  title={{LLaMEA}: A Large Language Model Evolutionary Algorithm for Automatically Generating Metaheuristics},
  journal={IEEE Transactions on Evolutionary Computation},
  volume={29},
  number={2},
  pages={331--345},
  year={2025},
  doi={10.1109/TEVC.2024.3497793}
}

@article{llm4ad,
  author={Liu, Fei and Zhang, Rui and Xie, Zhuoliang and Sun, Rui and Li, Kai and Lin, Xi and Wang, Zhenkun and Lu, Zhichao and Zhang, Qingfu},
  title={{LLM4AD}: A Platform for Algorithm Design with Large Language Models},
  journal={arXiv preprint arXiv:2412.17287},
  year={2024}
}

@article{llmrao,
  author={Noh, Hyeonho and Shim, Byonghyo and Yang, Hyun Jong},
  title={Adaptive Resource Allocation Optimization Using Large Language Models in Dynamic Wireless Environments},
  journal={IEEE Transactions on Vehicular Technology},
  volume={74},
  number={10},
  pages={16630--16635},
  year={2025},
  doi={10.1109/TVT.2025.3572440}
}

@inproceedings{netllm,
  author={Wu, Duo and Wang, Xianda and Qiao, Yaqi and Wang, Zhi and Jiang, Junchen and Cui, Shuguang and Wang, Fangxin},
  title={{NetLLM}: Adapting Large Language Models for Networking},
  booktitle={Proc. ACM SIGCOMM},
  pages={661--678},
  year={2024},
  doi={10.1145/3651890.3672268}
}

@article{tefnet24,
  author={Rivas-Moscoso, Jos{\'e} Manuel and Arpanaei, Farhad and Otero P{\'e}rez, Gabriel and Mart{\'i}nez Jim{\'e}nez, Jos{\'e} David and Fern{\'a}ndez-Palacios, Juan Pedro and Gonz{\'a}lez de Dios, {\'O}scar and Contreras, Luis Miguel and S{\'a}nchez-Maci{\'a}n, Alfonso and Hern{\'a}ndez, Jos{\'e} Alberto and Larrabeiti, David and Folgueira, Jes{\'u}s},
  title={{TEFNET24}: Reference Packet Optical Network Topology for Edge to Core Transport},
  journal={Journal of Optical Communications and Networking},
  volume={16},
  number={11},
  pages={G28--G39},
  year={2024},
  doi={10.1364/JOCN.533131}
}

@misc{tefnetdata,
  author={{Telef{\'o}nica Research and Development and Universidad Carlos III de Madrid}},
  title={{TEFNET24} Dataset},
  year={2024},
  publisher={Zenodo},
  doi={10.5281/zenodo.13362934}
}

@inproceedings{teraflow,
  author={Munoz, Ra{\'u}l and Vilalta, Ricard and Casellas, Ramon and others},
  title={{TeraFlow}: Secured Autonomic Traffic Management for a Tera of {SDN} Flows},
  booktitle={Proc. Joint European Conference on Networks and Communications and {6G} Summit},
  year={2021},
  doi={10.1109/EuCNC/6GSummit51104.2021.9482469}
}

@article{ibnsurvey,
  author={Leivadeas, Aris and Falkner, Matthias},
  title={A Survey on Intent-Based Networking},
  journal={IEEE Communications Surveys \& Tutorials},
  volume={25},
  number={1},
  pages={625--655},
  year={2023},
  doi={10.1109/COMST.2022.3215919}
}

@inproceedings{cegis,
  author={Solar-Lezama, Armando and Tancau, Liviu and Bodik, Rastislav and Seshia, Sanjit and Saraswat, Vijay},
  title={Combinatorial Sketching for Finite Programs},
  booktitle={Proc. 12th Int. Conf. Architectural Support for Programming Languages and Operating Systems},
  pages={404--415},
  year={2006},
  doi={10.1145/1168857.1168907}
}

@article{ndtsurvey,
  author={Kuruvatti, Nandish P. and Habibi, Mohammad Asif and Partani, Sanket and Han, Bin and Fellan, Amina and Schotten, Hans D.},
  title={Empowering {6G} Communication Systems With Digital Twin Technology: A Comprehensive Survey},
  journal={IEEE Access},
  volume={10},
  pages={112158--112186},
  year={2022},
  doi={10.1109/ACCESS.2022.3215493}
}

@inproceedings{multicommodity,
  author={Fortz, Bernard and Thorup, Mikkel},
  title={Internet Traffic Engineering by Optimizing {OSPF} Weights},
  booktitle={Proc. IEEE INFOCOM},
  pages={519--528},
  year={2000},
  doi={10.1109/INFCOM.2000.832232}
}

@article{kshortest,
  author={Yen, Jin Y.},
  title={Finding the {$K$} Shortest Loopless Paths in a Network},
  journal={Management Science},
  volume={17},
  number={11},
  pages={712--716},
  year={1971}
}

@inproceedings{ibnconflict,
  author={Ojaghi, Behnam and Vilalta, Ricard and Munoz, Ra{\'u}l},
  title={Optimizing Throughput and Energy with Intent-Based {6G} Network Slicing: Conflict Handling},
  booktitle={Proc. IEEE International Conference on Communications},
  year={2025}
}

@article{surveyAD,
  author={Liu, Fei and Yao, Yiming and Guo, Ping and Yang, Zhiyuan
          and Lin, Xi and Zhao, Zhe and Tong, Xialiang and Mao, Kun
          and Lu, Zhichao and Wang, Zhenkun and Yuan, Mingxuan
          and Zhang, Qingfu},
  title={A Systematic Survey on Large Language Models for Algorithm Design},
  journal={ACM Computing Surveys},
  volume={58},
  number={8},
  pages={1--32},
  articleno={218},
  year={2026},
  doi={10.1145/3787585}
}

@inproceedings{intentpaper,
  author={Ojaghi, Behnam and Vilalta, Ricard and Mu{\~n}oz, Ra{\'u}l},
  title={Intent-Based Network Resource Slicing in {6G}},
  booktitle={Proc. IEEE Network of the Future},
  year={2024},
  doi={10.1109/NoF62948.2024.10741513}
}

@article{ibnstnsm,
  author={Ojaghi, Behnam and Vilalta, Ricard and Mu{\~n}oz, Ra{\'u}l},
  title={{IBNS}: Optimizing Intent-Based {6G} Network Slicing for Conflict
         Detection and Mitigation},
  journal={IEEE Transactions on Network and Service Management},
  volume={23},
  pages={2818--2831},
  year={2026},
  doi={10.1109/TNSM.2026.3668027}
}

@article{m2oibns,
  author={Ojaghi, Behnam and Mu{\~n}oz, Ra{\'u}l and Vilalta, Ricard},
  title={Intent-Based Network Slicing in {6G}: Optimization Approach for
         Conflict Resolution},
  journal={IEEE Transactions on Mobile Computing},
  year={2026},
  doi={10.1109/TMC.2026.3677566}
}

\end{document}